
\documentclass[preprint,authoryear]{elsarticle}




\usepackage{amssymb}


\journal{}

\usepackage{graphicx}
%
%
%
%
%


\usepackage{amsmath,amssymb,amsfonts,latexsym}
\usepackage{color}
\usepackage{eucal}
\usepackage{xspace}
\usepackage{hyperref}

\usepackage{bm}

\usepackage[title]{appendix}




\newtheorem{proposition}{Proposition}
\newtheorem{proof}{Proof}




\usepackage[round]{natbib}

\usepackage{algorithm}
\usepackage[noend]{algpseudocode}


\DeclareMathOperator*{\argmax}{argmax} 

\begin{document}

\begin{frontmatter}
	
	
	
	\title{On an enhancement of RNA probing data using Information Theory
	}
	

	\author[add1]{Thomas J. X. Li}
\ead{jl9gx@virginia.edu}
	
	\author[add1,add2]{Christian M. Reidys\corref{mycorrespondingauthor}}
	\cortext[mycorrespondingauthor]{Corresponding author}
	\ead{duck@santafe.edu}
	
	\address[add1]{Biocomplexity Institute \& Initiative,
		University of Virginia, 995 Research Park Blvd,
		Charlottesville, VA, 22911}
	\address[add2]{	Department of Mathematics,
		University of Virginia,
	141 Cabell Dr, Charlottesville, VA 22903}
	
	\begin{abstract}
			Identifying the secondary structure of an RNA is crucial  for understanding its diverse regulatory functions.
		This paper focuses on how to enhance target identification
		in a Boltzmann ensemble of structures via  chemical probing data.
		We 	employ an information-theoretic approach  to solve the problem,
		via considering a variant of the R\'{e}nyi-Ulam game.
		Our framework is centered around the ensemble tree,
		a hierarchical bi-partition of the input ensemble,
		that is constructed by recursively querying about whether or not
		a base pair of maximum information entropy is contained in the target.
		These queries are answered via relating local with global probing data,
		employing the modularity in RNA secondary structures.
		We present that  leaves  of the tree are comprised of sub-samples 
		exhibiting a distinguished structure with high probability.
		In particular,	for a Boltzmann ensemble incorporating 
		probing data, which is well established in the literature, 
		the probability of our framework 
		correctly	identifying the target in the leaf  is greater than $90\%$.
		
	\end{abstract}
	
	
	
	\begin{keyword}
		RNA structure \sep chemical probing  \sep  R\'{e}nyi-Ulam game  \sep  information theory
		

		
			\MSC  	94A15  \sep 68P30   \sep 92E10 \sep 92B05
	\end{keyword}
	
\end{frontmatter}



	
	\section{Introduction}
	
	
Computational methods for RNA secondary structure prediction 
have played an important role in unveiling the various   regulatory  functions of RNA.
In  the past four decades, these approaches 
have evolved from predicting a single minimum free energy (MFE) structure~\citep{Waterman:78s,Zuker:84}
to  Boltzmann  sampling an ensemble of possible structures~\citep{McCaskill,Ding:03}.
Despite its  success in a wide range of small RNAs, these
thermodynamics-based predictions are by no means perfect. 

In parallel, experiments by means
of chemical and enzymatic probing 
have become a frequently used  
technology to elucidate RNA
structure~\citep{Stern:88,Merino:05,Deigan:09}. 
These probing methods  use chemical reagents to bind unpaired nucleotides
and yield reactivities at nucleotide resolution.
To some extent, these reactivities
 provide information concerning  single-stranded  or double-stranded RNA
 regions. 
 Recent advances focus on
 the development of thermodynamics-based computational tools 
 that incorporate such experimental data~\citep{Deigan:09, Washietl:12,Zarringhalam:12}.

 While the use of probing data has significantly improved the prediction accuracy
 of \textit{in silico} structure prediction for several classes of RNAs~\citep{Lorenz:11},
 these methods have not solved the folding problem for large RNA systems, 
 such as long non-coding RNAs (lncRNAs, typically 200--20k bases).
 The reason is that the footprinting data does not identify base pairing partners
 of a given nucleotide. 
 In particular, probing data alone cannot distinguish short-range and long-range base
 pairings. 
 For long RNAs, the existence of the latter, however,
  has been shown experimentally~\citep{Lai:18} as well as
 theoretically~\citep{Li:18,Li:19}. Thus, 
 even combined with experimental data, 
 there are still numerous RNA folds consistent with the probing data.

	We assume that 
the ensemble of possible structures is in thermodynamic equilibrium, i.e. a Boltzmann ensemble.
For many classes of RNAs,  it is also reasonable to assume that the sequence folds into
 a unique structure,  the \emph{target}, which is contained in the ensemble. 
 Hence, the  problem of structure prediction gives rise  to the following challenge:\\
\begin{equation}\label{C}
\textit{How to enhance  target identification in a Boltzmann ensemble of structures?}
\end{equation}

 In this paper, we employ an information-theoretic approach in order to solve Problem~\ref{C},
via considering a variant of the R\'{e}nyi-Ulam game.
 Our framework is centered around the \emph{ensemble tree},
a hierarchical bi-partition of the input ensemble,
whose leaves are comprised of sub-samples 
exhibiting a distinguished structure with high probability.
Specifically,
the ensemble tree is constructed by recursively querying about whether or not
a base pair of maximum information entropy is contained in the target.
  We prove that the query of maximum entropy base pair 
splits the ensemble into two even parts and in addition provides maximum
reduction in the entropy of the ensemble. 
These questions can be answered in the affirmative, since the sequence is assumed 
to a single  target.
They are answered via relating additional probing data with the initial one,
employing the modularity in RNA secondary structures.
By this means,  we  identify the correct path
in the ensemble tree
 from the root to the leaf. 

The key result of this paper is that the probability of the ensemble tree 
correctly identifying the target in the  leaf is greater than $90\%$,
for the Boltzmann ensembles incorporating
probing data from sequences of length $300$, see Section~\ref{S:StaProperty}.
To demonstrate the result, we take into consideration three components.
Firstly,
we utilize a $q$-Boltzmann sampler with signature distance filtration,
which is well suited for  Boltzmann ensembles subjected to the probing data
constraint~\citep{Deigan:09, Zarringhalam:12}, see
Section~\ref{S:bolt}.
Secondly, we  consider the error rates arisen from answering the queries via probing data
in Section~\ref{S:modular}. 
We show that these error rates can be significantly reduced via repeated queries in Section~\ref{S:StaProperty}.
Thirdly, we prove that the leaf with low information entropy  contains a distinguished structure,
see Section~\ref{S:bolt2}.
We  present that, once in the correct leaf,
the probability  the distinguished structure being identical to the target is almost always correct.

To summarize,  the key points of our approach are:
	\begin{enumerate}
		\item our method starts with a Boltzmann sample and derives a sub-sample that contains the target with high probability,
		\item the derivation is facilitated by means of the ensemble tree, and the identification
		of the correct path from root to leaf,  is obtained by a variant of the R\'{e}nyi-Ulam game,
		\item  the answers to the respective queries are  inferred from  chemical probing,
		by relating additional probing data to initial one using modularity.
\end{enumerate}

This paper is organized as follows: in Section~\ref{S:search}, we
introduce the main elements of our framework: 
the R\'{e}nyi-Ulam game, the Boltzmann ensemble, base-pair queries
and  the ensemble tree.
In Section~\ref{S:answers}, we demonstrate how to integrate 
additional probing data with the initial ones allowing to
 answer the queries, thereby identifying the correct path.
In Section~\ref{S:properties}, we analyze the ensemble tree
and present that our approach 
 identifies the target reliably and efficiently.
Finally, we integrate and discuss our results  in Section~\ref{S:dis}.

\section{Some background}
\label{S:search}


\subsection{The R\'{e}nyi-Ulam game}

We now approach Problem~\ref{C} via the R\'{e}nyi-Ulam game, a two-person game, played
by a questioner (Q) and an oracle, (O). Initially O thinks of an integer, $Z$, between
one and one million and Q's objective is to identify $Z$, asking yes-no questions. O is
allowed to lie at a rate specific to yes and no, respectively.

The R\'{e}nyi-Ulam game has been  extensively studied since the early works by~\cite{Renyi:61,Ulam:76},
and has  various applications such as adaptive error-correcting codes in the context of noisy
communication~\citep{Shannon:48,Berlekamp:68}. Depending on the respective application scenario,
numerous variants of the R\'{e}nyi-Ulam game have been considered, specifying the format of admissible
queries or the way O lies~\citep{Pelc:89,Spencer:92}.

In what follows, we shall play the following version of the game:
O holds a set of \emph{bit strings} $y_1 y_2 \cdots y_l$ of finite length $l$,
not every bit string being equally likely selected and the queries have to following
format: ``Is the $i$th-bit of the bit string equal to $1$?'', i.e. ~Q executes \emph{bit query}. 
O's lies occur at \emph{random}, are \emph{independent} and \emph{context-dependent}.
Specifically, O lies with probability $e_0$ and $e_1$ in case of the truthful answer being ''No'' and ``Yes'',
respectively. The particular cases $e_0=0$  and $e_1=0$ have been studied in the context of \emph{half-lies}~\citep{Rivest:80}.

The majority of studies on the R\'{e}nyi-Ulam game to date is \emph{combinatorial}.
That is, they stipulated the number of lies (or half-lies) being \textit{a priori} known and  
focused on finding optimal searching strategies which uses a minimum number of queries to identify
the target in all cases~\citep{Rivest:80,Spencer:92}.

Within the framework of this paper, we study two distinctly different manifestations of the oracle.
The first is embodied as an indicator random variable, whose distribution is derived from a modularity
analysis on RNA MFE-structures, see Section~\ref{S:modular}, and the second recruits experimental data, see
Section~\ref{S:experiment}. In both manifestations, erroneous responses arise intrinsically at random: either as a
result of the distribution of the r.v.~or intrinsic errors of experimental data.

By construction, this rules out a unique winning strategy for Q: instead, we consider the
\emph{average fidelity} or accuracy to identify the target utilizing a \emph{sub-optimal} number of
queries.
We shall propose an entropy-based strategy: at any point a query is selected relative to the subset of
bit strings coinciding with the target in all previously identified positions, that maximizes the
uncertainty reduction on the subset, see Section~\ref{S:bolt4}.
 

\subsection{The Boltzmann ensemble}
\label{S:bolt}

At a given point in time, an RNA sequence, $\mathbf{x}$, assumes a fixed secondary structure, by
establishing base pairings. Over time, however, $\mathbf{x}$ assumes a plethora of RNA secondary
structures appearing at specific rates, see ~\ref{A:1} for details and context on RNA.
These exist in an equilibrium ensemble expressed by the partition function~\citep{McCaskill} of
$\mathbf{x}$.

More formally, the \emph{structure ensemble}, $\Omega$ of $\mathbf{x}$ is a discrete probability space 
over the set of all secondary structures, equipped with the probability $p(s)$ of $\mathbf{x}$ folding
into $s$. We shall assume that the ensemble of structures is in thermodynamic equilibrium, the
distribution of these structures being described as a Boltzmann distribution. 
The \emph{Boltzmann probability}, $p(s)$, of the structure $s$ is a function of the
\emph{free energy} $E(s)$ of the sequence $\mathbf{x}$ folding into 
$s$,
computed via the Turner energy model~\citep{Mathews:99,Mathews:04}, see ~\ref{A:energy} for details.
The Boltzmann probability $p(s)$ is expressed as the Boltzmann factor $\exp{(-E(s)/R T)}$,
normalized by the \emph{partition function}, $Z=\sum_{s\in \Omega} \exp{(-E(s)/R T)}$, i.e. 
\[
p(s) = \frac{\exp{(-E(s)/R T)}}{Z},
\]
where $R$ denotes the universal gas constant and $T$ is the absolute temperature. The Boltzmann
distribution facilitates the computation of the partition function $Z$ for each substructure.
The partition function algorithm~\citep{McCaskill} for secondary structures computes $Z$
and, in particular, the base pairing probabilities based on the free energies for each structure
within the structure ensemble $\Omega$.

Let $p_{i,j}$ denote the probability of a base pairing between nucleotides $i$ and $j$
in the ensemble $\Omega$. Clearly, $p_{i,j}$ can be computed as the sum of probabilities of
all secondary structures that contain $(i,j)$, that is,
\[
p_{i,j} = \sum_{s\in \Omega} p(s) \delta_{i,j}(s),
\]
where $\delta_{i,j}(s)$ denotes the occurrence of the base pair $(i,j)$ in $s$.

The thermodynamics-based partition function
 has been extended to 
incorporate  chemical probing data to generate a Boltzmann ensemble, $\Omega_{\text{probe}}$. 
These approaches~\citep{Deigan:09, Washietl:12,Zarringhalam:12} transform 
structure probing data into a pseudo  energy term,  $\Delta G(s) $,
which reflects how well the structure agrees with the
probing data.
The Turner free energy is  then evaluated by
adding the pseudo  energy term to the loop-based energy,
i.e., $E_{\text{probe}}(s)= E(s)+ \Delta G(s) $.
The corresponding equilibrium  ensemble, $\Omega_{\text{probe}}$, is 
distorted in favor of structures that are consistent with probing
data, see ~\ref{A:probe}.

In ~\ref{S:probedata},  we utilize the $0$-$1$ signature, which is suited for probing data,
 and  quantify 
 the discrepancy between the Boltzmann ensemble and the target
 via the signature distance $d_{sn}$. 
We present  that
the average distance for an unrestricted ensemble $\Omega$
is $0.21 n$,
while the distance for an ensemble  $\Omega_{\text{probe}}$ incorporating probing data is 
reduced to  $0.03n$. 
This motivates us to define a \emph{$q$-Boltzmann ensemble},
 $\Omega^q$, 
which consists of structures having signature distance to  the target $s$ at most $q  n$,
i.e., $\Omega^q=\{ s'| d_{\text{sn}}(s',s) \leq q n \}$.
In particular, we  present that the ensemble $\Omega_{\text{probe}}$ has
an average normalized  signature distance similar to a $q$-ensemble having $q=0.05$.
In this paper we discuss unrestricted and restricted Boltzmann ensembles,
$\Omega$ and $\Omega^q$.

We shall employ \emph{greyscale diagrams} in order to visualize a sample of secondary structures
by superimposing them in one diagram, visualizing the base pairing probabilities.
A greyscale diagram displays each base pair $(i,j)$ as an arc with greyscale $1-p_{i,j}$,
where greyscale $0$ represents black and $1$ represents white, see Fig.~\ref{F:grey}. 	

\begin{figure}
	\centering
	\includegraphics[width=0.6\textwidth]{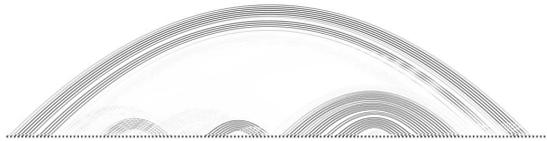}
	\caption
	{\small The greyscale diagram of of $1024$ Boltzmann sampled structures
	  of a random RNA sequence via \textsf{ViennaRNA}~\citep{Lorenz:11}.		
	}
	\label{F:grey}
\end{figure}

Instead of computing the entire ensemble, we shall consider sub-samples $\Omega'$ consisting
of $N$ secondary structures with multiplicities of $\mathbf{x}$ and refer to $\Omega'$ as the
\emph{sample}. For sufficiently large $N$ (typically of around size $10^3$, see~\cite{Ding:03}),
$\Omega'$ provides a good approximation of the Boltzmann ensemble $\Omega$.

A sample $\Omega'$ is a multiset of cardinality $N$ and for each structure $s$ in
$\Omega'$, its \emph{multiplicity}, $f(s)$, counts the frequency of $s$ appearing
in  $\Omega'$. Thus in the context of $\Omega'$, $p(s)$ is given by the $s$-multiplicity
divided by $N$, $p(s)=f(s)/N$. The base pairing probability $p_{i,j}$ has its
$\Omega'$-analogue $f(i,j)/N$, where $f(i,j)$ denotes the frequency of the
base pair $(i,j)$ appearing in  $\Omega'$. We shall develop our framework in the
context of the structure ensemble $\Omega$, and only reference the sample $\Omega'$,
in case the results are particular to $\Omega'$.


\subsection{The bit queries}
\label{S:bolt3}

Any structure over $n$ nucleotides is considered as a bit string of dimension
$\binom{n}{2}$, stipulating (1) a structure is completely determined by the set
of base pairs it contains and (2) any position can pair with any other position,
except of itself. 

The bit query now determines a single bit,
	i.e. whether or not the base pair $(i,j)$ is present in the
target, stipulating that a unique target is assumed by the sequence in question.
The target is  also assumed to have the Boltzmann probability 
as it appears in the ensemble.
We therefore associate
the  query about the target with 
a random variable, $X_{i,j}$, defined on the ensemble,
via
questioning the presence of  $(i,j)$ in each structure.
By construction, the distribution of $X_{i,j}$ is given by the base pairing probability 
$\mathbb{P}(X_{i,j}(s)=1)=p_{i,j}$.

Any base pair, $(i,j)$, has an \emph{entropy}, defined by the information entropy of
$X_{i,j}$, i.e. 
\[
H(X_{i,j})=-p_{i,j}  \log_2 p_{i,j}  -(1-p_{i,j} )  \log_2 (1-p_{i,j} ),
\]
where the units of $H$ are in bits. The entropy $H(X_{i,j}) $ measures the uncertainty of  the base
pair $(i,j)$ in $\Omega$ . When a base pair $(i,j)$  is certain to either exist or not, its entropy
$H(X_{i,j})$ is $0$. However, in case $p_{i,j} $ is closer to $1/2$, $H(X_{i,j})$ becomes larger.

The r.v. $X_{i,j}$ partitions the space $\Omega$ into two disjoint sub-spaces $\Omega_0$ and
$\Omega_1$, where  $ \Omega_k=\{s\in  \Omega :X_{i,j}(s)=k\}$ ($k=0,1$), 
and the induced distributions are given by
\[
p_0(s)=\frac{p(s)}{1-p_{i,j}} \quad \text{ for } s\in  \Omega_0,\qquad
p_1(s)=\frac{p(s)}{p_{i,j}} \quad \text{ for } s\in  \Omega_1.
\]

Intuitively, $H(X_{i,j}) $ quantifies the average bits of information
we would expect to gain about the ensemble when querying a   base pair $(i,j)$. 
This motivates  us to consider the \emph{maximum entropy base pairs}, the base pair $(i_0,j_0)$
having  maximum entropy among all base pairs in $\Omega$, i.e. 
\[
(i_0,j_0)= \argmax_{(i,j)}  H(X_{i,j}).
\]
As we shall prove in Section~\ref{S:bolt2}, $X_{i_0,j_0}$ produces maximally balanced
splits.

\subsection{The ensemble tree}
\label{S:bolt4}

Equipped with the notion of ensemble and bit query (i.e.~the respective maximum entropy base pairs),
we proceed by describing our strategy to identify the target structure as specified in Problem~\ref{C}.
The first step consists in having a closer look at the space of ensemble reductions.

Each split obtained by partitioning the ensemble $\Omega$ using r.v. $X_{i,j}$, can in turn be
bipartitioned itself via any of its maximum entropy base pairs. This recursive splitting induces the
{\it ensemble  tree}, $T(\Omega)$, whose vertices are sub-samples and in which its $k$-th layer represents
a partition of the original ensemble into $2^k$ blocks. $T(\Omega)$, is a rooted binary tree, in which
each branch represents a $X_{i,j}$-induced split of the parent into its two children.

Formally the process halts if either the resulting sub-spaces are all \emph{homogeneous}, i.e.~their
structural entropy is $0$, which means that they contain only copies of one structure, or it reaches a
predefined  maximum level $L$.
In our case we set the maximum level to be  $L=11$, that is, the height of the ensemble tree is at
most $10$. The procedure is described as follows:
\begin{enumerate}
	\item start with the ensemble $\Omega$.
	\item for each space $\Omega_{\mathbf{k}}$ with $H(\Omega_{\mathbf{k}})>0$, where
          $\mathbf{k}$ is a sequence of $0$s and $1$s 
	having length at most $L-1=10$, compute:
\begin{itemize}
\item 	select  the maximum entropy  base pair $X_{i_{\mathbf{k}},j_{\mathbf{k}}}$  of
  $\Omega_{\mathbf{k}}$ as the feature, i.e. 
	\[
	X_{i_{\mathbf{k}},j_{\mathbf{k}}}= \argmax_{(i,j)  \text{ in } \Omega_{\mathbf{k}}}  H(X_{i,j}).
	\] 
	\item 	split  $\Omega_{\mathbf{k}}$ into  sub-spaces $\Omega_{\mathbf{k}0}$ and
	$\Omega_{\mathbf{k}1}$ using the feature $X_{i_{\mathbf{k}},j_{\mathbf{k}}}$, that is,
$ \Omega_{\mathbf{k}l }=\{s\in  \Omega_{\mathbf{k}} :X_{i_{\mathbf{k}},j_{\mathbf{k}}}(s)=l\}$ for $l=0,1$, 
\end{itemize}	
	\item repeat Step $2$ until all new sub-spaces either have structural entropy $0$
	or reach the maximum level $11$.
\end{enumerate}

\begin{algorithm}
	\caption{Ensemble Tree}\label{ensemble}
	\begin{algorithmic}[1]
		\Procedure{$T$}{$\Omega$}
		\State $\mathbf{k} \gets \{\} $
		\State $\Omega_{\mathbf{k}} \gets \Omega$ 
		\Repeat
		\State $X_{i_{\mathbf{k}},j_{\mathbf{k}}}\gets \argmax_{(i,j)  \text{ in } \Omega_{\mathbf{k}}}  H(X_{i,j})$
		\State  $ \Omega_{\mathbf{k}l }\gets \{s\in  \Omega_{\mathbf{k}} :X_{i_{\mathbf{k}},j_{\mathbf{k}}}(s)=l\}$ for $l=0,1$
		\State Append $0$ or $1$ to $\mathbf{k}$
		\Until{$H(\Omega_{\mathbf{k}})=0$ or $|\mathbf{k}|=10$ }
		\State \textbf{return} $\{\Omega_{\mathbf{k}}\} $
		\EndProcedure
	\end{algorithmic}
\end{algorithm}

In Fig.~\ref{F:ensembletree} we display an ensemble tree.

\begin{figure}
	\centering
	\includegraphics[width=0.9\textwidth]{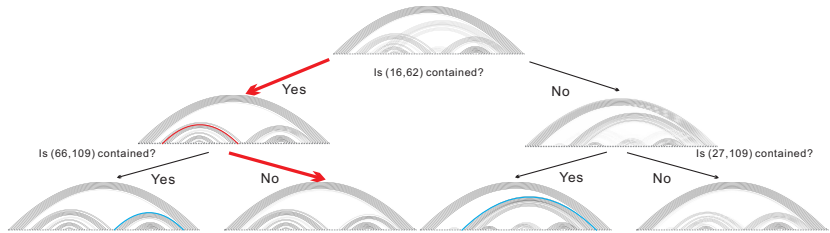}
	\caption
	{\small An ensemble tree having maximum level $3$.
		A  path
		from the root to a leaf  is identified in color red.
	}
	\label{F:ensembletree}
\end{figure}

\section{Path identification}
\label{S:answers}

Given the ensemble tree, we shall construct a path recursively starting
from the root to identify the leaf that contains the target. We shall do
so by successive bit queries about maximum entropy base pairs, see Fig.~\ref{F:ensembletree}.

As mentioned before, we employ two manifestations of the oracle, one using modularity based on
RNA-folding, see Section~\ref{S:modular}, and the other determining the existence of base pairs by
experimental means (Section~\ref{S:experiment}).

\subsection{The oracle via modularity and RNA folding }
\label{S:modular}

Here we shall employ \emph{modularity} of RNA structures, i.e. ~the loops,
which constitute the additive building blocks for the free energy have only marginal
dependencies. This can intuitively be understood by observing that any two loops
can only intersect in at most two nucleotides, see ~\ref{A:energy}.

Let us introduce the notion of embedding and extraction of a contiguous subsequence
or fragment
\begin{align*}
\epsilon_{i,j}((x_1,\dots, x_j),(y_1,\dots,y_m)) & =  (y_1,\dots,y_{i-1},x_1,\dots,x_j,y_{i+1},\dots y_m)\\
\xi_{i,j}(x_1,\dots,x_{n}) & =  ((x_i,\dots,x_{j}),(x_1,\dots,x_{i-1},x_{j+1},\dots,x_{n})).
\end{align*}
By construction, we have $\epsilon_{i,j}\circ \xi_{i,j}=\text{\rm id}$ and a contiguous subsequence or
fragment of an RNA sequence is called \emph{modular} if it being extracted folds into the same
arc configuration as it does embedded in the sequence. 

Next we show how to employ probing data to reliably answer whether or not a particular
(maximum entropy) arc is contained in the target structure. Structural modularity implies
that if this arc can indeed be found in the target structure, then a comparative analysis
of the probing data of the entire sequence with those of the extracted sequence, as well
as the remainder, concatenated at the cut points will exhibit distinctive similarity.
Modularity is a decisive discriminant, if, in contrast, random fragments do not exhibit such
similarity.

To quantify to what extent modularity can discriminate base pairs, 
we perform computational experiments on random sequences via splittings. For each sequence, we consider
its MFE structure $s$ computed via \textsf{ViennaRNA}~\citep{Lorenz:11}.
Given two positions $i$ and $j$, we cut the entire sequence $\mathbf{x}$  into two fragments,
$\mathbf{x}_{i,j}$ and the remainder  $\bar{\mathbf{x}}_{i,j}$, 
	i.e., $ \xi_{i,j}(\mathbf{x})=(\mathbf{x}_{i,j},\bar{\mathbf{x}}_{i,j})$.
Subsequently, the two fragments $\mathbf{x}_{i,j}$ and  $\bar{\mathbf{x}}_{i,j}$ refold
into their MFE structures $s_{i,j}$  and   $\bar{s}_{i,j}$, respectively,
which are combined into a structure $\epsilon_{i,j}(s_{i,j}, \bar{s}_{i,j})$.
If bases $i$ and $j$ are paired in $s$,  such a splitting is referred to as \emph{modular}
and the resulting structure is denoted by $s'$.
Otherwise, it is called \emph{random}, with  the output structure $s''$. 
We proceed by computing the base-pair and signature distance from the MFE $s$ to the structures
$s'$ or $s''$. The base-pair distance is one of the most frequently used metrics to quantify the similarity
of two different structures viewed as bit strings~\cite{Zuker:89,Agius:10}, the signature distance measures the similarity
between their signatures, which is well suited within the context of the probing profiles, see ~\ref{A:1}.

Fig.~\ref{F:shapeM} (LHS) compares the distribution of the signature distances $d_{\text{sn}}(s,s')$ and
$d_{\text{sn}}(s,s'')$ obtained from modular and random splittings, respectively. The structures induced by
modular splitting have much more similar probing signatures to their MFE structures, than those induced by
random splitting. The situation is analogous for base-pair distances, see Fig.~\ref{F:shapeM} (RHS).
Since these distances measure structural similarity, the data also indicates that, when $i$ and $j$ form a
base pair, the fragment $\mathbf{x}_{i,j}$ is more likely to fold into the same configuration as it does being
embedded, i.e.  $\mathbf{x}_{i,j}$ is modular.

\begin{figure}
	
	\begin{tabular}{cc}
		\includegraphics[width=0.45\textwidth]{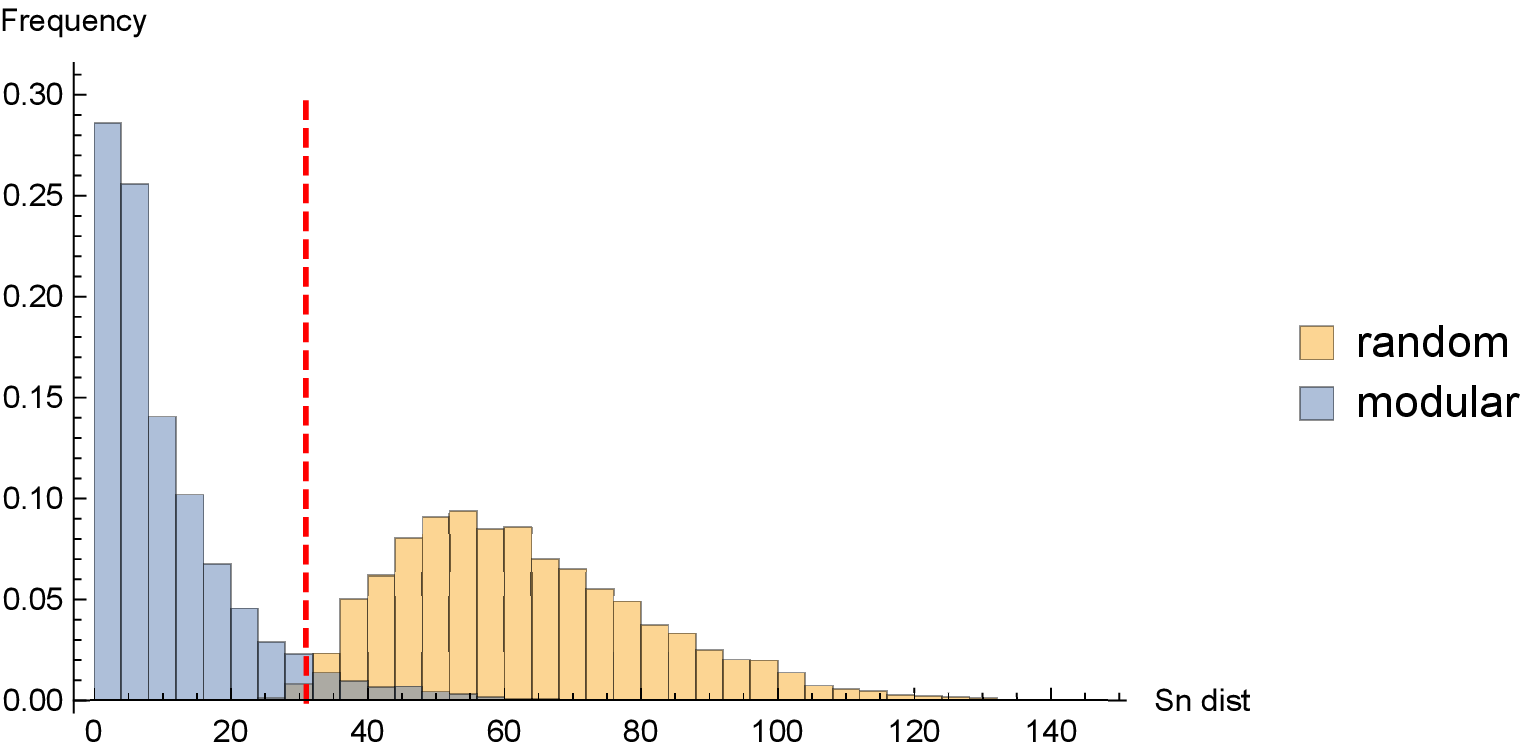}&
		\includegraphics[width=0.45\textwidth]{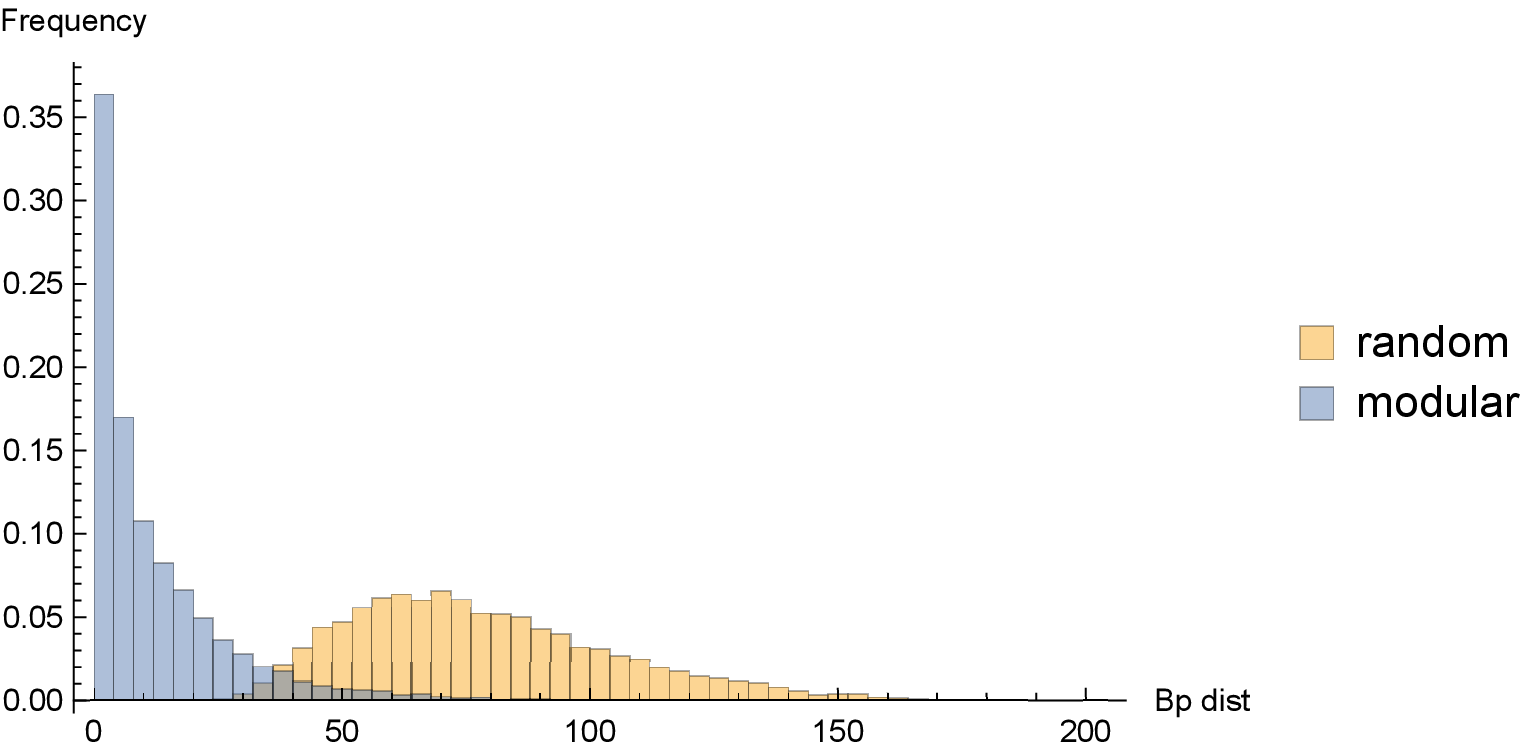}
		\\
	\end{tabular}
	\caption
	    {\small The distributions of the signature distances (LHS) $d_{\text{sn}}(s,s'),d_{\text{sn}}(s,s'')$
              and the base-pair distances (RHS) $d_{\text{bp}}(s,s'),d_{\text{bp}}(s,s'')$ 
		obtained from modular splitting (blue) and random splitting (orange). 
		We generated $8000$  random sequences $\mathbf{x}$ of length $500$ and computed their 
		structures $s$ (MFE), $s'$ (modular), $s''$ (random) via \textsf{ViennaRNA}~\citep{Lorenz:11}.
		The red dashed line (left) denotes ``threshold distance'', $31$ (see main text).
	}
	\label{F:shapeM}
\end{figure}

The data displayed in Fig.~\ref{F:shapeM} suggests the threshold distance, $\theta$, for signatures, by 
which we distinguish modular from random. In order to quantify the accuracy of this classification, we
consider the resulting false discovery rate (FDR) and false omission rate (FOR).\footnote{
	\[
	\text{FDR}=\frac{\text{FP}}{\text{TP}+\text{FP}}, \qquad\qquad
        \text{FOR}=\frac{\text{FN}}{\text{TN}+\text{FN}},
	\]
	where TP (true positive) is the number of correctly identified  base pairs, 
	FP (false positive) is the number of incorrectly
	predicted pairs that do not exist in the accepted
	structure, 
	TN (true negative) is the number of 
	pairs of bases that are  correctly identified as unpaired and 
	FN (false negative) is the number of base pairs in the accepted RNA structure 
	that are incorrectly predicted as unpaired.}
In our R\'{e}nyi-Ulam game variation, the expected values of FDR and FOR are the error rates $e_1$
and $e_0$ in case the truthful answer being yes and no, respectively.
Fig.~\ref{F:threshold} displays the error rates $e_0$ and $e_1$ as functions of $\theta$.
For $\theta=31$, we compute $e_0\approx 0.052$ and $e_1 \approx 0.007$, i.e. ~we have an error rate
of $ 0.052$ for rejecting and an error rate of $0.007$ for confirming a base pair.

\begin{figure}
	\centering
	\includegraphics[width=0.8\textwidth]{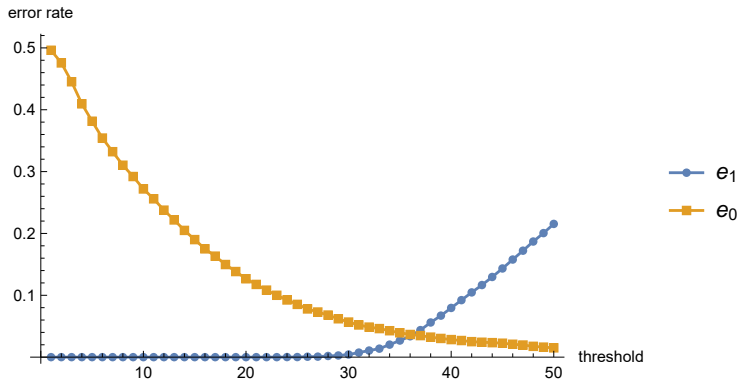}
	\caption
	{\small The error rates $e_0$ and $e_1$ as a function of the threshold $\theta$.
		We use the same sequences and structures as described in Fig.~\ref{F:shapeM}.
	}
	\label{F:threshold}
\end{figure}

\subsection{The oracle via experimental data }
\label{S:experiment}

The identification of base pairs is a fundamental and longstanding problem in RNA
biology~\citep{Hajdin:13,Weeks:15}. In ~\ref{S:oracle}, we summarize state-of-the-art
experimental approaches that provide  reliable solutions to the problem, and in particular
detail two methods, both of which utilize chemical probing~\citep{Mustoe:19,Cheng:17} and recover duplexes
with a false discovery rate less than $0.05$.

Successive queries recursively split a given ensemble of structures. This induced sequence of splits
can be embedded in a binary tree, and be viewed as a path from the root to a leaf. We shall discuss
this tree in detail in the next section.

\section{The ensemble tree}
\label{S:properties}

Given an input sample $\Omega$, we construct the ensemble tree $T(\Omega)$ having maximum level $L=11$,
recursively computing the maximum entropy base pairs as described in Algorithm~\ref{ensemble}.
In this section, we shall analyze the entropy of leaves in order to quantify the existence of a distinguished
structure and to identify the target. 

\subsection{Entropy}
\label{S:bolt2}

To quantify the uncertainty of an ensemble, we define 
the \emph{structural entropy} of an ensemble, $\Omega$, of an RNA sequence, $\mathbf{x}$,
as the Shannon entropy
\begin{equation*}
H(\Omega) = -\sum_{s\in \Omega} p(s) \log_2 p(s),
\end{equation*}
the units of $H$ being bits.
The sum is taken over all secondary structures $s$ of $\mathbf{x}$, and $p(s)$ denotes the
Boltzmann probability of the structure $s$ in the ensemble $\Omega$. The notion of structural
entropy is originated in thermodynamics and is usually regarded as a measure of disorder,
or randomness of an ensemble~\citep{Sukosd:13,Garcia:15}.

Given a sample $\Omega'$ of size $N$, the structural entropy has the upper bound $ \log_2
N$, that is, $H_{}(\Omega')$  reaches its maximum when all sampled structures are different.
Throughout the paper, we assume $N=1024$ and therefore $H_{}(\Omega')\leq 10$.

\begin{proposition}\label{C:bound2}
	Let $\Omega'$ be a sample having structural entropy $E$, where $0\leq E \leq 1$. 
	Then there exists one structure in $\Omega'$ having probability at least $f(E)$,
	where  $f(E)$  is the solution of the equation
	$$
	-p \log_2 p -(1-p)  \log_2 (1-p)=E
	$$
	satisfying $0.5\leq p \leq 1$. In particular, we have $f(1)=0.5$, $f(0.469) \approx 0.9$ and 
	$f(0.286) \approx 0.95$, see Fig.~\ref{F:fE}.
\end{proposition}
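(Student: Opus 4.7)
The plan is to identify a structure $s^\star \in \Omega'$ of maximal Boltzmann probability $p^\star = \max_{s \in \Omega'} p(s)$ and show directly that $p^\star \geq f(E)$. Writing $H_2(p) := -p\log_2 p - (1-p)\log_2(1-p)$, the function $f$ is by definition the inverse of $H_2$ restricted to $[1/2,1]$, where $H_2$ is continuous and strictly decreasing from $H_2(1/2)=1$ to $H_2(1)=0$. Thus proving $p^\star \geq f(E)$ reduces to verifying two statements: (i) $p^\star \geq 1/2$ and (ii) $H_2(p^\star) \leq E$.

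For (i) I would invoke the classical fact that Shannon entropy dominates min-entropy. Since $p(s) \leq p^\star$ for every $s \in \Omega'$, we have $\log_2 \tfrac{1}{p(s)} \geq -\log_2 p^\star$, and averaging against $p$ gives
\[
E \;=\; H(\Omega') \;=\; \sum_{s \in \Omega'} p(s)\,\log_2 \tfrac{1}{p(s)} \;\geq\; -\log_2 p^\star.
\]
Combined with the hypothesis $E \leq 1$, this forces $p^\star \geq 1/2$.

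For (ii) I would apply the grouping property of Shannon entropy, bipartitioning $\Omega'$ into $\{s^\star\}$ and its complement, which yields
\[
H(\Omega') \;=\; H_2(p^\star) \,+\, (1-p^\star)\, H(\Omega' \setminus \{s^\star\}),
\]
where the right-hand entropy is taken under the renormalized distribution on $\Omega' \setminus \{s^\star\}$. Since this second term is non-negative, $H_2(p^\star) \leq E$. Combining (i) and (ii) with the strict monotonicity of $H_2$ on $[1/2,1]$ then yields $p^\star \geq f(E)$, as required.

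I expect step (i) to be the subtle point. Without the bound $p^\star \geq 1/2$, the equation $H_2(p)=E$ admits a second solution in $[0,1/2]$ and one cannot identify $f(E)$ as the correct branch of the inverse. The min-entropy inequality, coupled with the hypothesis $E \leq 1$, is exactly what rules out the lower branch and pins the conclusion to the upper one.
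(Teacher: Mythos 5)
Your proof is correct, and it follows the same overall skeleton as the paper's argument in~\ref{A:3} --- lower-bound the sample entropy by the binary entropy of the largest mass, then show that this largest mass must lie on the upper branch $[1/2,1]$ of the inverse of $H_2$ --- but the branch-selection step is handled by a genuinely different and more economical device. Your step (ii), the grouping identity $H(\Omega')=H_2(p^\star)+(1-p^\star)H(\Omega'\setminus\{s^\star\})$, is essentially Proposition~\ref{P:bound} (the paper phrases it as the entropy being minimized when all remaining mass sits on a single structure; you obtain the same inequality by dropping a nonnegative conditional-entropy term). The real divergence is in step (i): the paper rules out the lower solution $g(E)<1/2$ by a constrained-minimization argument over the probability simplex, showing that if every structure had probability below $1/2$ then the entropy would strictly exceed $1$, contradicting $E\le 1$; you instead invoke the one-line inequality $H(\Omega')\ge-\log_2 p^\star$ (Shannon entropy dominates min-entropy), giving $p^\star\ge 2^{-E}\ge 1/2$ immediately. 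Your route is shorter, avoids the optimization argument (whose statement in the paper contains a sign slip), and makes transparent exactly where the hypothesis $E\le 1$ enters. Both proofs then conclude identically from the strict monotonicity of $H_2$ on $[1/2,1]$.
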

\begin{figure}
	\centering
	\includegraphics[width=0.6\textwidth]{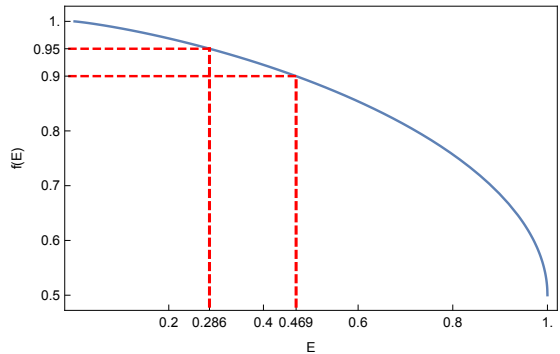}
	\caption
	{\small A sample with structural entropy $E$ contains a distinguished structure having probability
		at least $f(E)$.
	}
	\label{F:fE}
\end{figure}

Proposition~\ref{C:bound2} implies that a sample with small structural entropy contains a
distinguished structure and a proof is given in ~\ref{A:3}.
We refer to a sample having a distinguished structure of probability at least $\lambda$ as
being \emph{$\lambda$-distinguished}.
   
Next we quantify the reduction of a bit query on an ensemble.
Recall that the associated r.v. $X_{i,j}$  of a base pair $(i,j)$ 
partitions the sample $\Omega$ into two disjoint sub-samples $\Omega_0$ and
$\Omega_1$, where  $ \Omega_k=\{s\in  \Omega :X_{i,j}(s)=k\}$ ($k=0,1$).

The \emph{conditional entropy}, $H(\Omega|X_{i,j})$,  
represents 
the expected value of the entropies of the conditional distributions on $\Omega$,
averaged over the conditioning r.v. $X_{i,j}$ and can be
computed by 
\begin{equation*}
H(\Omega|X_{i,j})= (1-p_{i,j}) H(\Omega_0)+ p_{i,j} H(\Omega_1).
\end{equation*}
Then the \emph{entropy reduction} $R(\Omega,X_{i,j})$ of  $X_{i,j}$ on $\Omega$ 
is the difference between the \textit{a priori} Shannon entropy $H(\Omega)$ and the conditional
entropy $H(\Omega|X_{i,j}) $, i.e. 
\begin{equation*}
R(\Omega,X_{i,j})= H(\Omega)-H(\Omega|X_{i,j}). 
\end{equation*}
The entropy reduction quantifies the average change in information entropy from an ensemble
in which we cannot tell whether or not a certain structure contains $(i,j)$, to its bipartition
where one of its two blocks consists of structures that contain $(i,j)$ and the other being its
complement.
\begin{proposition}\label{P:1}
	The entropy reduction $R(\Omega,X_{i,j})$ of $X_{i,j}$ is given by the entropy
	$H(X_{i,j})$ of $X_{i,j}$, i.e. 
	\begin{equation}\label{Eq:IR}
	R(\Omega,X_{i,j})= H(X_{i,j}).
	\end{equation}
\end{proposition}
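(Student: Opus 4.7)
The plan is to unfold the definition of $H(\Omega|X_{i,j})$ using the induced conditional distributions $p_0$ and $p_1$ given explicitly earlier in the excerpt, and then show that the sum telescopes into $H(\Omega) - H(X_{i,j})$. The underlying principle is the chain rule for Shannon entropy: since the structure $s$ determines $X_{i,j}(s)$ deterministically, we have $H(\Omega, X_{i,j}) = H(\Omega)$, and decomposing this joint entropy in the other order yields $H(\Omega) = H(X_{i,j}) + H(\Omega|X_{i,j})$, which is exactly the claim after rearrangement.

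More concretely, I would first compute $(1-p_{i,j}) H(\Omega_0)$. Substituting $p_0(s) = p(s)/(1-p_{i,j})$ into the Shannon formula and splitting the logarithm gives
\[
(1-p_{i,j}) H(\Omega_0) = -\sum_{s\in \Omega_0} p(s)\log_2 p(s) + (1-p_{i,j})\log_2(1-p_{i,j}).
\]
The analogous calculation for $\Omega_1$ yields
\[
p_{i,j}\, H(\Omega_1) = -\sum_{s\in \Omega_1} p(s)\log_2 p(s) + p_{i,j}\log_2 p_{i,j}.
\]

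Next I would add these two expressions. Since $\Omega_0$ and $\Omega_1$ partition $\Omega$, the two negative sums combine into $-\sum_{s\in\Omega} p(s) \log_2 p(s) = H(\Omega)$. The two remaining terms assemble into $-H(X_{i,j})$, using the definition of $H(X_{i,j})$ given just before Proposition~\ref{P:1}. Hence $H(\Omega|X_{i,j}) = H(\Omega) - H(X_{i,j})$, and substitution into the definition of $R(\Omega,X_{i,j})$ gives \eqref{Eq:IR}.

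There is no real obstacle here, as the statement is a direct instance of the mutual information identity $I(\Omega; X_{i,j}) = H(X_{i,j}) - H(X_{i,j}|\Omega)$ combined with the fact that $H(X_{i,j}|\Omega) = 0$ (the structure determines the presence or absence of the base pair). The only bookkeeping care needed is to ensure the normalization factors $(1-p_{i,j})$ and $p_{i,j}$ are correctly pulled out of the logarithms and recombined with $\sum_{s\in\Omega_k} p(s) = \mathbb{P}(X_{i,j}=k)$, so that the cross terms reassemble cleanly into $H(X_{i,j})$.
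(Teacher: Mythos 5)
Your proof is correct and follows essentially the same route as the paper: the paper's proof of Proposition~\ref{P:1} (given in the appendix in the general multi-valued setting) performs exactly this expansion of $H(\Omega|X)=\sum_i q_i H(\Omega_i)$, splits the logarithm of $p(s)/q_i$, and recombines the terms into $H(\Omega)-H(X)$. Your binary-case bookkeeping and the chain-rule framing via $H(\Omega,X_{i,j})=H(\Omega)$ are both sound and add nothing that conflicts with the paper's argument.
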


Proposition~\ref{P:1} queries a Bernoulli random variable inducing a split, reducing its average
conditional entropy exactly by the entropy of the random variable itself. In the context of the
R\'{e}nyi-Ulam game, Q asks a question that helps to maximally reduce the space of possibilities.
A proof of Proposition~\ref{P:1} is presented in ~\ref{A:4}.

The next observation shows that querying maximum entropy base pairs, induces a best possible balanced split
of the ensemble.

\begin{proposition}\label{P:subspace}
	Suppose that $X_{i,j}$ induces a partition of the ensemble $\Omega$ into sub-samples
	$\Omega_0^{i,j}$  and $\Omega_1^{i,j}$.
	Let $(i_0,j_0)$ be a maximum entropy base pair of $\Omega$.
	Then we have\\
	{\rm (1)} 
	$(i_0,j_0)$ minimizes the difference of the probabilities of the two sub-samples,
	\[
	|\mathbb{P}(\Omega_0^{i_0,j_0}) -\mathbb{P}(\Omega_1^{i_0,j_0})|\leq |\mathbb{P}(\Omega_0^{i,j}) -\mathbb{P}(\Omega_1^{i,j})|,
	\]
	for any $(i,j)$.\\
	{\rm (2)} $(i_0,j_0)$  maximizes  the entropy reduction $R(\Omega,X_{i,j})$  
	of  $X_{i,j}$ on $\Omega$,
	\[
	R(\Omega,X_{i_0,j_0}) \geq R(\Omega,X_{i,j}),
	\]	
	for any $(i,j)$.	 
\end{proposition}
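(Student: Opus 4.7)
The plan is to reduce both claims to the fact that the binary entropy function $h(p)=-p\log_2 p-(1-p)\log_2(1-p)$ is symmetric about $1/2$, strictly increasing on $[0,1/2]$ and strictly decreasing on $[1/2,1]$. The key observation to set up is that the Bernoulli random variable $X_{i,j}$ has $\mathbb{P}(X_{i,j}=1)=p_{i,j}$, so $\mathbb{P}(\Omega_1^{i,j})=p_{i,j}$, $\mathbb{P}(\Omega_0^{i,j})=1-p_{i,j}$, and $H(X_{i,j})=h(p_{i,j})$. Once this is in place, the rest amounts to a monotonicity argument on $h$.

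For part (1), I would rewrite the probability gap as
\[
|\mathbb{P}(\Omega_0^{i,j})-\mathbb{P}(\Omega_1^{i,j})|=|1-2p_{i,j}|=2\left|p_{i,j}-\tfrac{1}{2}\right|.
\]
Because $h$ is strictly unimodal with maximum at $1/2$, we have $h(p)\geq h(q)$ if and only if $|p-\tfrac{1}{2}|\leq|q-\tfrac{1}{2}|$. Thus maximizing $H(X_{i,j})=h(p_{i,j})$ over all base pairs is equivalent to minimizing $|p_{i,j}-\tfrac{1}{2}|$, and hence to minimizing $|\mathbb{P}(\Omega_0^{i,j})-\mathbb{P}(\Omega_1^{i,j})|$. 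Applying this to $(i_0,j_0)=\arg\max_{(i,j)} H(X_{i,j})$ yields claim (1) directly.

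For part (2), I would invoke Proposition~\ref{P:1}, which identifies the entropy reduction with the query entropy: $R(\Omega,X_{i,j})=H(X_{i,j})$. Since $(i_0,j_0)$ is chosen to maximize $H(X_{i,j})$ by definition, we immediately obtain $R(\Omega,X_{i_0,j_0})\geq R(\Omega,X_{i,j})$ for every base pair $(i,j)$.

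There is no real obstacle: once Proposition~\ref{P:1} is available, both statements are essentially restatements of the definition of a maximum entropy base pair combined with the unimodality of $h$. The only subtlety worth remarking on is the equivalence in part (1) between maximizing $h(p_{i,j})$ and minimizing the imbalance $|1-2p_{i,j}|$, which I would justify explicitly by the strict monotonicity of $h$ on each side of $1/2$, so that the argmax is well defined up to ties occurring exactly when the imbalances coincide.
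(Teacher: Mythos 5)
Your proof is correct and follows essentially the same route as the paper's: part (1) via the identification $\mathbb{P}(\Omega_1^{i,j})=p_{i,j}$ and the symmetry/monotonicity of the binary entropy about $1/2$, and part (2) as an immediate consequence of Proposition~\ref{P:1}. Your explicit reformulation that maximizing $h(p_{i,j})$ is equivalent to minimizing $|p_{i,j}-\tfrac{1}{2}|$ is a slightly cleaner packaging of the same monotonicity argument the paper uses.
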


Proposition~\ref{P:subspace}  first shows that the bit query about the maximum entropy base pair $X_{i_0,j_0}$ 
partitions the ensemble as balanced as possible, i.e.~into sub-samples having the minimum difference of their
probabilities. It furthermore establishes that the splits have minimum average structural entropy (or
uncertainty), since $X_{i_0,j_0}$ provides the maximum entropy reduction on the ensemble. Thus the query about
$(i_0,j_0)$ is the most informative among all bit queries.

Finally we  quantify the average entropy of sub-samples, $\Omega_{t}$, 
on the $t$-th  level  of the ensemble tree, and establish the existence of a distinguished
structure.
The analysis of entropies depends of course on the way the samples
are being constructed.        
To this end,
we construct the ensemble tree for two types of samples, 
one being unrestricted samples of random sequences, $\Omega$, and the other
utilizing $q$-Boltzmann sampling that incorporates
the signature of the target, $\Omega^q$, see Section~\ref{S:bolt}.

For unrestricted Boltzmann samples,
  the    structural entropy $H(\Omega_{t})$  of sub-samples on the $t$-th  level decreases, 
 as the  level  $t$ increases, see Fig.~\ref{F:SelStrpathLevel2}.
 In particular, the  average entropy $H(\Omega_{11})$ of leaf samples is 
 $0.328 $ and $0.147$, for sequences having $200$ and $300$ nucleotides,
 respectively.
 Proposition~\ref{C:bound2} guarantees that 
 the leaf $\Omega_{11}$  is  $0.90$-distinguished, i.e.  
 containing a distinguished structure with ratio at least  $0.90$,
 and $0.95$-distinguished for sequences of length $300$.

	  \begin{figure}
	  	\centering
	  	\includegraphics[width=0.8\textwidth]{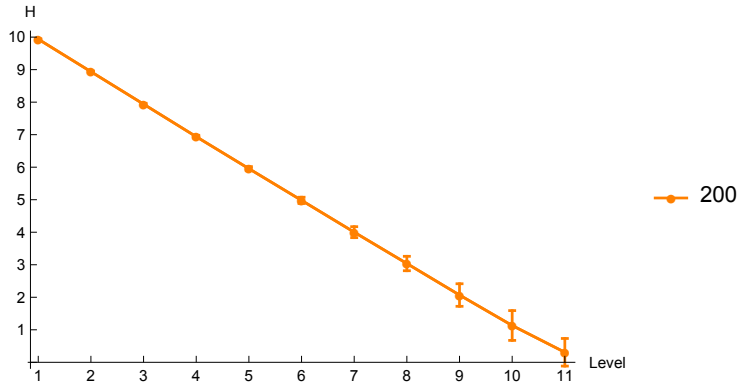}
	  	\caption{\small The average entropy of
	  		sub-samples  $H(\Omega_{t})$ on the $t$-th level.
	  		We randomly generate $10^3$ sequences of length  $200$, and sample $2^{10}$ structures
	  		together with a target structure $s$
	  		for each sequence. 
	  	}
	  	\label{F:SelStrpathLevel2}
	  \end{figure}

For $q$-Boltzmann samples  $\Omega^q$ of structures having signature distance to the target $s$ at most $qn$,
the small entropy of the leaf and the high ratio of the distinguished structure are
robust over a range of $q$-values, see Fig.~\ref{F:EntLeafq}.
We also observe that,
for longer sequences,
the entropy is smaller,
and therefore the ratio of the distinguished structure is higher.

    \begin{figure}
    	\centering
    	\includegraphics[width=0.8\textwidth]{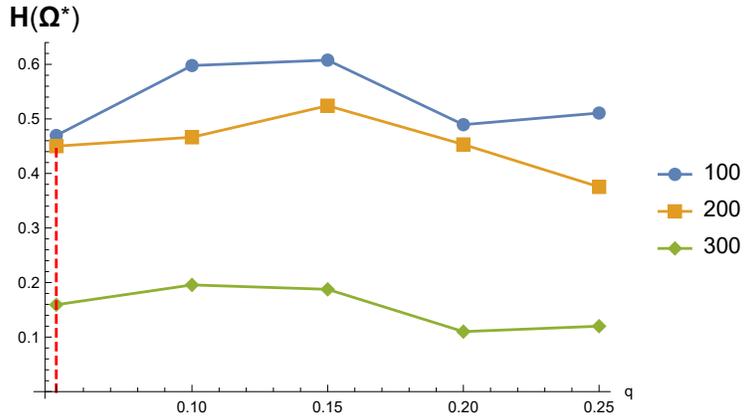}
    	\caption
    	{\small  The structural entropy $H(\Omega^q_{11})$  of the leaf sub-samples
    		for different  $q$-values.
    		We randomly generate $10^3$ sequences of length  $100$, $200$ and $300$. 
    		For each sequence, we then generate a $q$-Boltzmann sample $\Omega^q$
    		of $2^{10}$ structures
    		together with a target $s$.
    		The red dashed line  denotes $q$-samples having $q=0.05$,
    		which is tantamount to
    		Boltzmann samples $\Omega_{\text{probe}}$ incorporating the probing data via
    		pseudo-energies.
    	}
    	\label{F:EntLeafq}
    \end{figure}


\subsection{Target Identification}
\label{S:StaProperty}

Any leaf of the ensemble tree exhibiting a structural entropy less than one, contains, by Proposition~\ref{C:bound2},
a distinguished structure. Successive queries produce a unique, distinguished leaf, $\Omega^*$ which, with
high probability, contains structures that are compatible with the queries. Let $s^*$ be the distinguished
structure in $\Omega^*$, and $s$ denote the target.

In this section, we shall analyze this probability, $\mathbb{P}(s\in \Omega^*)$, as well as $\mathbb{P}(s^*=s)$ and
$\mathbb{P}(s^*=s\mid s\in \Omega^*)$, see Table~\ref{Tab:s1}. For the path identification to the leaf $\Omega^*$,
we consider the error rates $e_0=0.05$ and $e_1=0.01$ computed in Section~\ref{S:modular}.

As detailed in Section~\ref{S:modular}, these probabilities depend on the error rates $e_0$ and $e_1$, and
since these errors occur independently, we derive
$\mathbb{P}(s\in \Omega^* )=(1-e_0)^{l_0} (1-e_1)^{l_1}$,
where $l_0$ and  $l_1$ denote the number of No-/Yes-answers to queried base pairs along the path,
respectively.  
Fig.~\ref{F:YesDistribution} displays the distribution of $l_1$. We observe that $l_1$ has
a mean around $5$,
i.e., the probabilities of queried base pairs being confirmed and being rejected
are roughly equal.
For $l_0=l_1=5$,  we have a theoretical estimate  $\mathbb{P}(s\in \Omega^* )\approx 0.736$.
In Fig.~\ref{F:estimateLeaf}  we present  that $\mathbb{P}(s\in \Omega^* )$ decreases 
as  the error rate $e_0$ increases, for fixed $e_1=0.01$.

\begin{figure}
	\centering
	\includegraphics[width=0.8\textwidth]{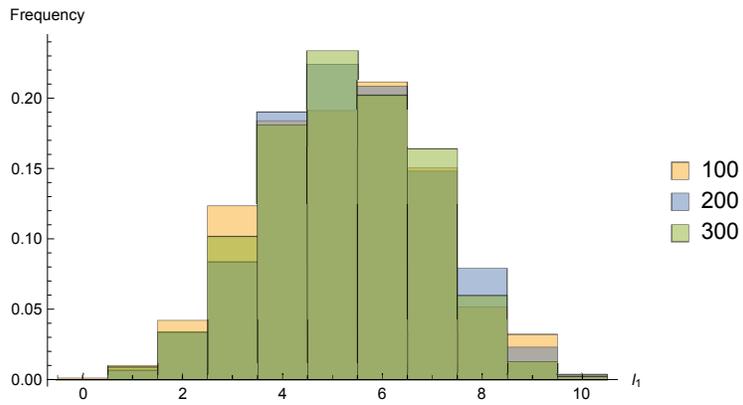}
	\caption
	{\small  The distributions of $l_1$, the number of queried base pairs on the path that are confirmed by the target structure.
		We generate unrestricted Boltzmann samples for random sequences of different lengths.
	}
	\label{F:YesDistribution}
\end{figure}

\begin{figure}
	\centering
	\includegraphics[width=0.8\textwidth]{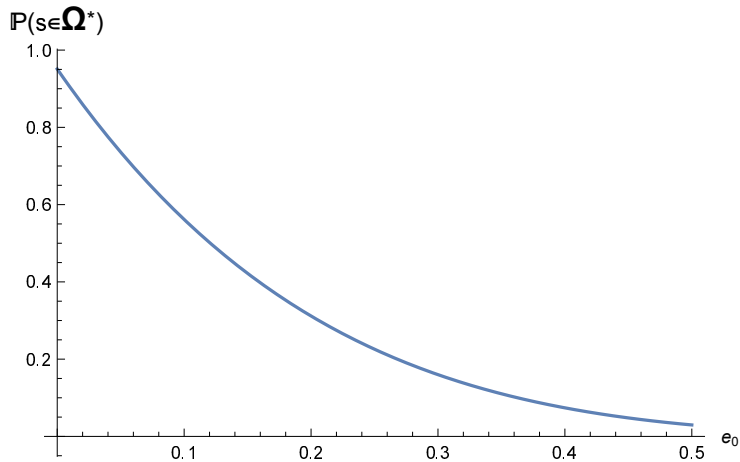}
	\caption
	{\small  The probability $\mathbb{P}(s\in \Omega^* )$ as a function of 
	the error rate $e_0$, for fixed $e_1=0.01$ and $l_0=l_1=5$.
	}
	\label{F:estimateLeaf}
\end{figure}

For (unrestricted) Boltzmann samples generated from random sequences, we present the probability
$\mathbb{P}(s\in \Omega^*)$ of the leaf containing the target is greater than $74\%$, which agrees with the above 
theoretical estimate. Note that this amounts
to having no probing data as a constraint for the sampled structures, a worst case scenario, so to speak.

\begin{table}[htbp]
	\caption{ Key observables.} \label{Tab:s1}
	\begin{tabular}{cc}
		\hline\noalign{\smallskip}
		\small 	Quantity  &\small Description \\
		\noalign{\smallskip}\hline\noalign{\smallskip}
		\small 	$\mathbb{P}(s\in \Omega^*)$  &\small the probability  of the target being in the leaf\\
		\small 	$\mathbb{P}(s^*=s)$  &\small  the probability  of  the distinguished structure being  identical to the target \\
		\small  $\mathbb{P}(s^*=s\mid s\in \Omega^*)$	&\small  the probability of correctly identifying the target, given that it is in the leaf
		\\   	    
		\noalign{\smallskip}\hline\noalign{\smallskip}	            
	\end{tabular}
\end{table}

Furthermore, the probability that the distinguished structure is identical to the target is approximately
unchanged, see Table~\ref{Tab:sum}. $\mathbb{P}(s^*=s\mid s\in \Omega^*)$ indicates, that once we are in the
correct leaf, the chance of correctly identifying the target increases to  $94\%$ for sequences of length $300$.
Accordingly, the key factor is the correct identification of the leaf $\Omega^*$.

\begin{table}[htbp]
	\caption{Target identification:
	we randomly generate $10^3$ sequences of length $n$ and Boltzmann sample $2^{10}$ structures
	together with a target structure $s$ for each sequence. We display mean and standard deviation.
         } \label{Tab:sum}
	\begin{tabular}{cccc}
		\hline\noalign{\smallskip}
		\small 	&\small $n=100$   &\small $n=200$  & \small $n=300$ \\
		\noalign{\smallskip}\hline\noalign{\smallskip}
				\small  $\mathbb{P}(s\in \Omega^*)$	&\small $0.768 \pm 0.178$             
		&\small $0.742 \pm 0.192$			 
		&\small $0.751 \pm 0.187$         	\\     
			\small  $\mathbb{P}(s^*=s)$	&\small $0.669 \pm 	0.222$             
		&\small $0.646\pm 0.229$			 
		&\small $0.706 \pm 0.208$         	\\   	 
		\small $\mathbb{P}(s^*=s\mid s\in \Omega^*)$ &\small $0.871 \pm 0.288$             
		&\small $0.871 \pm 0.309$ 			 
		&\small $0.940 \pm 0.277$         		         	\\    
		\noalign{\smallskip}\hline\noalign{\smallskip}	            
	\end{tabular}
\end{table}

For $q$-Boltzmann samples $\Omega^q$ filtered by signature distance $\le q n$ we observe the following:
the probability $\mathbb{P}(s\in \Omega^{*} )$ of the leaf to contain the target is greater than $70\%$ 
is robust over a range of $q$-values, see Fig.~\ref{F:ProbLeafq}.
As expected, as $q$ increases, the probability of the target being in the correct leaf decreases, due to the
fact that the $q$-samples become less constraint by the probing data.

In particular, 
we observe that, for $q=0.05$ and sequences of length $300$,  the probability of the ensemble tree 
correctly identifying the target in the  leaf is greater than $90\%$, see Fig.~\ref{F:ProbLeafq} (red dashed line).
As  the Boltzmann ensembles incorporation of probing data via
pseudo-energies result in a  $q$-value of $0.05$, this translates into
$\mathbb{P}(s\in \Omega^{*})\ge 90\%$ for such ensembles generated by such restricted Boltzmann samplers
for sequences of length $300$.

We demonstrate that the ensemble tree  localizing the target 
with high fidelity is robust, across  samples of sequences having 
various lengths and different  signature  filtration $q$.
Fig.~\ref{F:ProbEq} (LHS) shows that the ensemble tree for longer sequences has a  higher 
chance of identifying the target.
Once we are in the correct leaf, the chance of  correctly distinguishing the target
significantly increases, from around $75\%$ to over $94\%$ 
in the case of sequences having $200$ nucleotides,
see Fig.~\ref{F:ProbEq} (RHS).

\begin{figure*}
	\centering
	\includegraphics[width=0.8\textwidth]{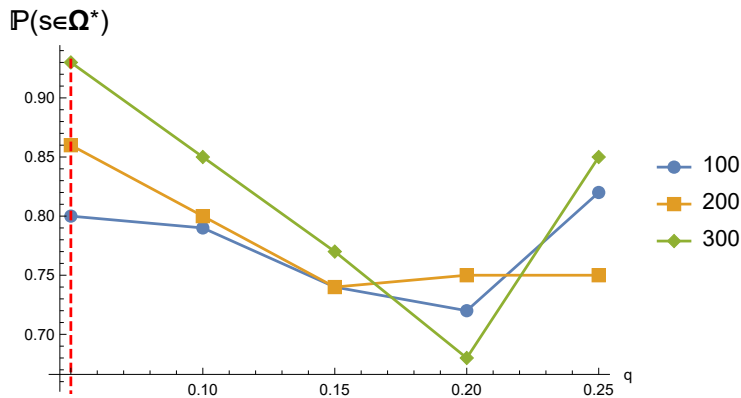}
	\caption{\small The probability $\mathbb{P}(s\in \Omega^{*} )$
		of being in  the correct leaf for different $q$-values.
	We use the same sequences and $q$-Boltzmann samples as described in  Fig.~\ref{F:EntLeafq}.	
  		The red dashed line  denotes $q$-samples having $q=0.05$,
which is tantamount to
Boltzmann samples $\Omega_{\text{probe}}$ incorporating the probing data via
pseudo-energies.}
	\label{F:ProbLeafq}
\end{figure*}

\begin{figure*}
	\begin{tabular}{cc}
		\includegraphics[width=0.45\textwidth]{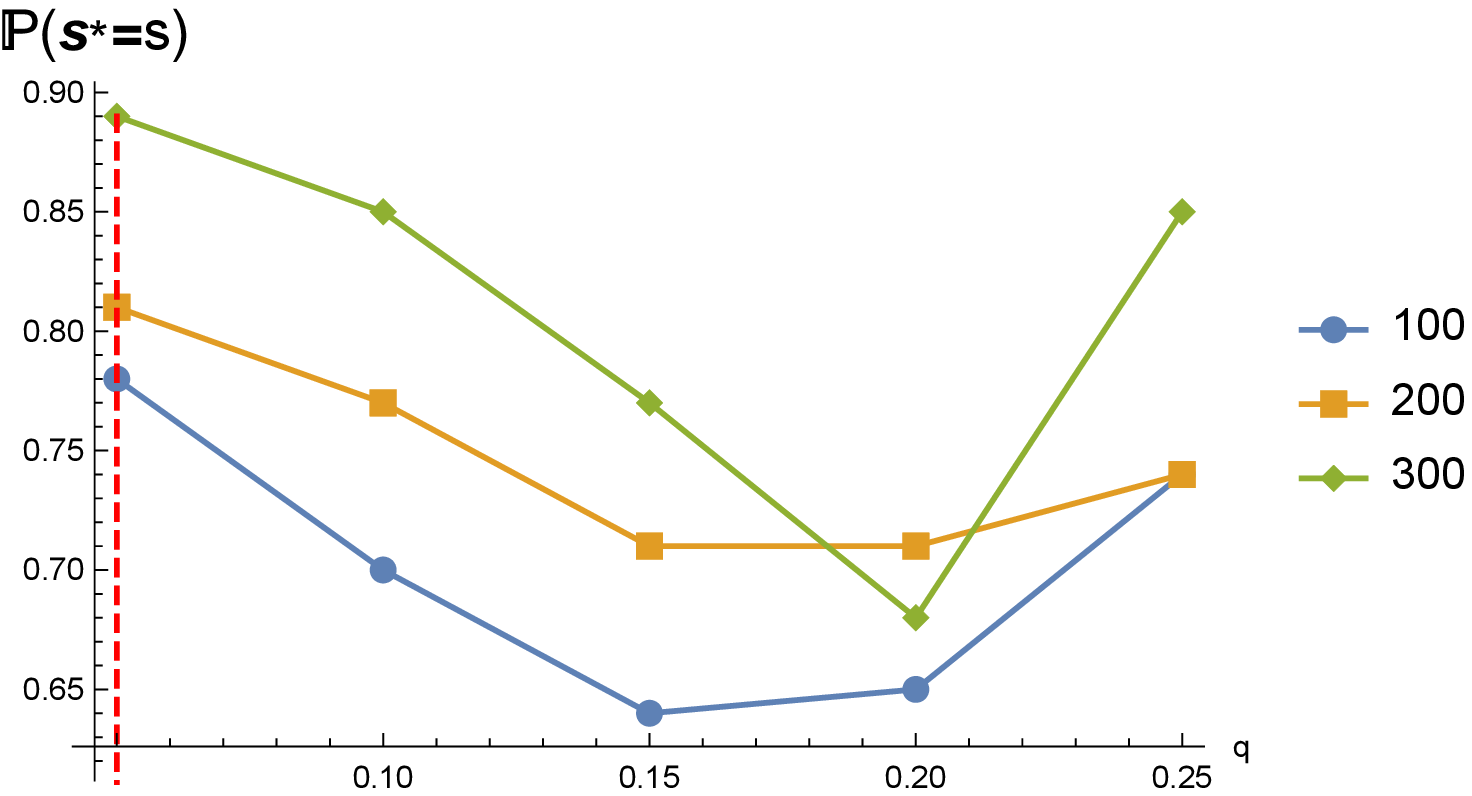}&
		\includegraphics[width=0.45\textwidth]{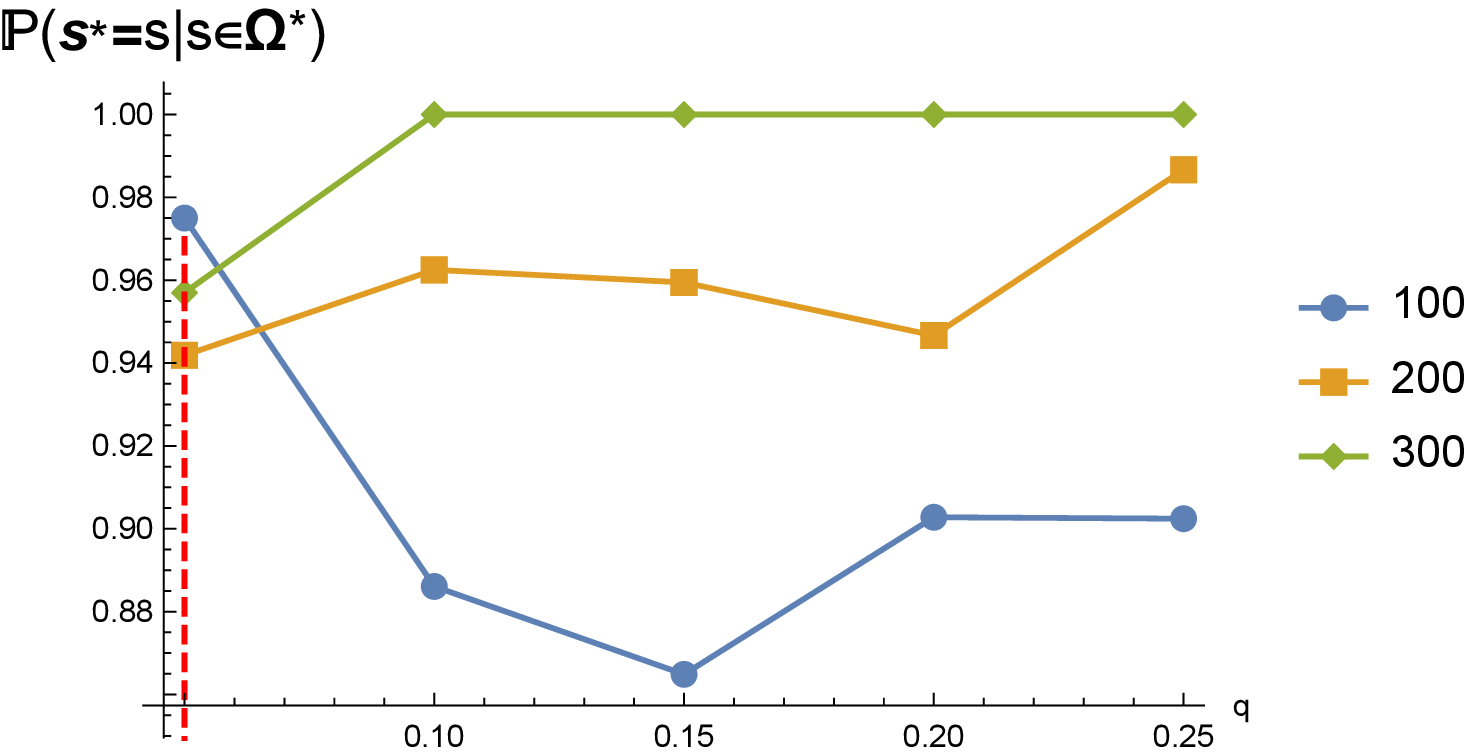}
		\\
	\end{tabular}
	\caption{\small 
		The probabilities $\mathbb{P}(s^*=s)$	 (LHS)
		and $\mathbb{P}(s^*=s\mid s\in \Omega^*)$ (RHS)
		of correctly identifying the target, either  in general or
		conditioning on being in the correct leaf.
		We use the same sequences and $q$-Boltzmann samples as described in  Fig.~\ref{F:EntLeafq}.	
  		The red dashed line  denotes $q$-samples having $q=0.05$,
which is tantamount to
Boltzmann samples $\Omega_{\text{probe}}$ incorporating the probing data via
pseudo-energies.	}
	\label{F:ProbEq}
\end{figure*}

As mentioned above, the key is the correct identification of the leaf containing the
target, and its distinguished structure to coincide with the latter. These events are quantified via
$\mathbb{P}(s\in \Omega^{*} )$ and $\mathbb{P}(s^*=s)$, which depend on the error rates $e_0$ and $e_1$.

These error rates 
can be reduced by asking the same query  repeatedly.  
In our R\'{e}nyi-Ulam  game, repeating the same query 
is tantamount to performing the same experiment multiple times.
It is reasonable to assume that   experiments are performed independently
and thus errors occur randomly.
Intuitively, repeated experiments reduce 
errors originated from the noisy nature of experimental data.
Utilizing Bayesian analysis, we show that,  if we get the same  answer to the query twice,
the error rates would become significantly smaller, for example,
$e_0^{[2]} =0.003$ and $e_1^{[2]} =0.00005$, see ~\ref{A:error}.

In principle, we can reduce the error rates by repeating the same query $k$ times.
The error rates  would approach to $0$ as $k$ grows to infinity. 
 In this case, $\mathbb{P}(s\in \Omega^{*} ) \approx 1$, i.e. 
 the leaf always contains the target.
The fidelity of 
the distinguished structure $\mathbb{P}(s^{*}=s ) $ increases from $70\%$ to $94\%$
for sequences of length $300$.


\section{Discussion}
\label{S:dis}

In this paper we propose to enhance the method of identifying the target structure based on RNA probing data.
To facilitate this we introduce the framework of ensemble trees in which a sample derived from the partition
function of structures is recursively split via queries using information theory.
Each query is answered based on either RNA folding data in combination with chemical probing, employing
modularity of RNA structures, see Section~\ref{S:modular} or, alternatively, directly using experimental methods~\citep{Mustoe:19,Cheng:17}.
The former type of inference can be viewed as a kind of localization of probing data, relating local
to global data by means of structural modularity. We show that within this framework it is possible to identify
the target with high fidelity and that this identification requires a small number of base pairs to be queried.
In particular we present that, for the Boltzmann ensembles incorporating
	probing data via pseudo-energies, the probability of the ensemble tree 
 identifying the  correct leaf that contains the target is greater than $90\%$, see Section~\ref{S:StaProperty}.

In our framework, the key factor is the correct identification of the leaf that contains the target.
Fig.~\ref{F:SelStrpatherror} displays the average base-pair distances $d_{\text{bp}}(s,\Omega_{t})$
\footnote{Here 
	$
	d_{\text{bp}}(s,\Omega)= \sum_{s'\in \Omega} p(s') d_{\text{bp}}(s,s').
	$}
between the target structure $s$ and the $t$-th sub-sample $\Omega_{t}$ on the path. We contrast three scenarios, first
the expectation being taken over all ensemble trees (blue), the set of ensemble trees in which the
leaf containing the target is identified (green) and its complement (orange). 
We here present that the correct identification of the leaf containing the
target significantly reduces the distance between  the target and the sub-samples.

\begin{figure}
	\centering
	\includegraphics[width=0.8\textwidth]{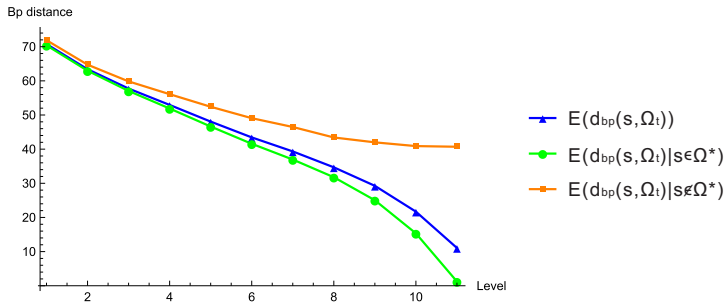}
	\caption
	{\small The average base-pair distance $d_{\text{bp}}(s,\Omega_{t})$ between  the target   $s$ and the sub-sample $\Omega_{t}$ on the path.
		The expectation is taken over all ensemble trees (blue), the set of ensemble trees in which the
		leaf containing the target is identified (green) and its complement (orange).
	The computation is based on the   Boltzmann  samples of sequences of length $300$.
	}
	\label{F:SelStrpatherror}
\end{figure}

Our framework is based on two assumptions.
The first is sampling from the Boltzmann ensemble of structures.
This assumption is important, as for an arbitrary sample, 
the leaf of the ensemble tree does not always contain
 a distinguished structure.
By quantifying the distinguished structure via the flow of entropies of sub-samples on the path, 
we contrast three classes of samples,
the first being  a Boltzmann sample (B-sample), the second  a uniform sample  (U-sample) and the third an E-sample\footnote{consisting of 
	$N$ different structures with the uniform distribution, 
	each structure  containing only one base pair.},
see Fig.~\ref{F:SelStrpathLevel}.
We present that,
in a Boltzmann sample,
the entropies of sub-samples on the $t$-th level
decrease  much more sharply than those in  the latter two classes, see Fig.~\ref{F:SelStrpathLevel} (LHS).
 In particular, the latter two  produce
 leaves exhibiting an average entropy greater than $1$,
 i.e. not containing a distinguished structure.
 As proved in Proposition~\ref{P:1},
 the entropy reduction equals to the entropy of the queried base pair.
Fig.~\ref{F:SelStrpathLevel} (RHS) explains the reason for the significant reduction,
that is, 
the maximum entropy base pairs  in  Boltzmann samples have
entropy close to $1$ on each level, implying that the bit queries  split
the ensemble roughly in half each time.
The latter two types of samples do not exhibit this phenomenon.

\begin{figure*}
	\begin{tabular}{cc}
		\includegraphics[width=0.45\textwidth]{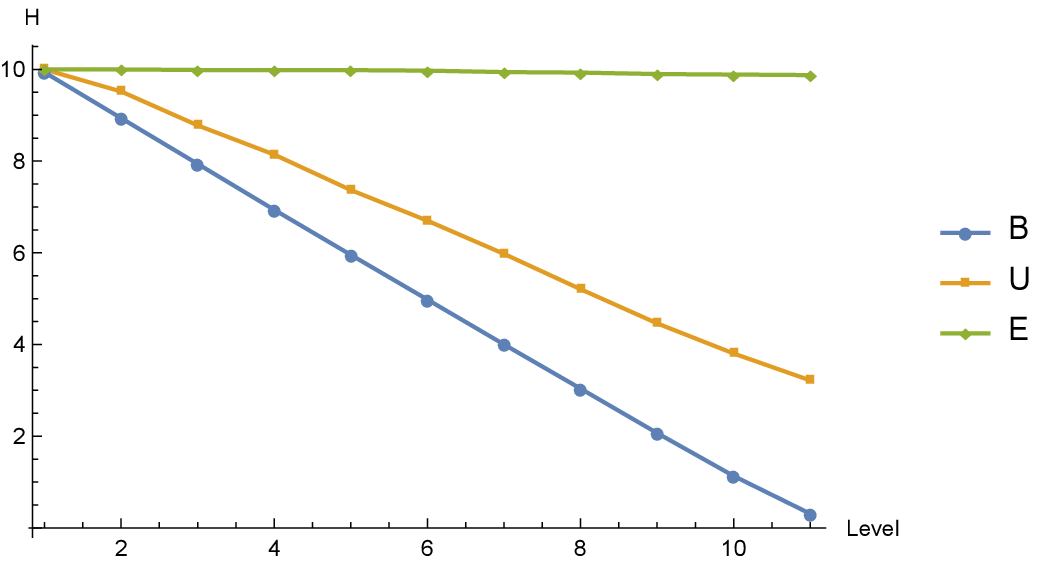}&
		\includegraphics[width=0.45\textwidth]{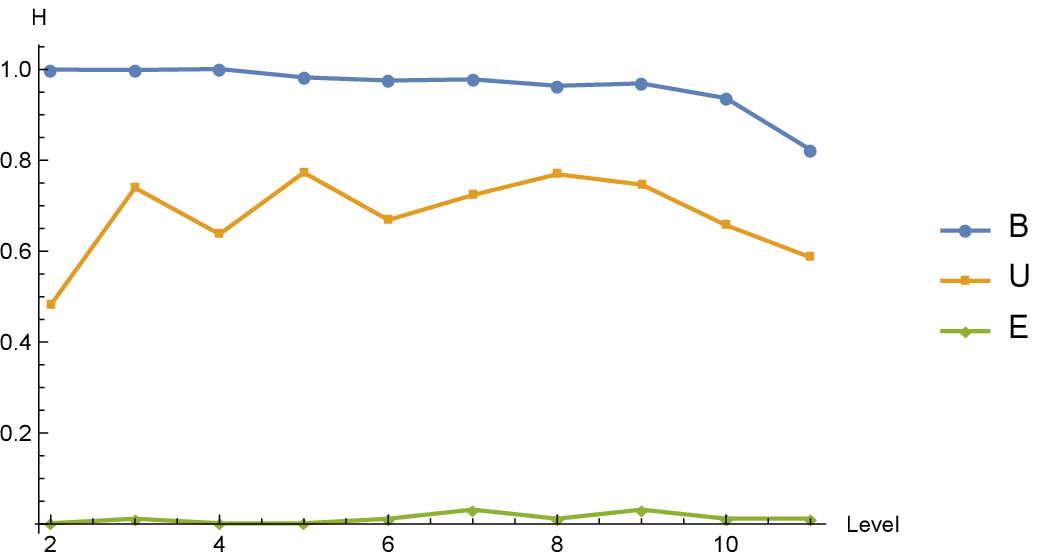}
		\\
	\end{tabular}
	\caption{\small The average entropy of
		sub-samples  $H(\Omega_{t})$ (LHS) and  queried base pairs  $H(X_{t})$ (RHS)   on the $t$-th level of the
		ensemble tree.
We contrast the ensemble trees obtained from
 a Boltzmann sample (B, blue), a uniform sample (U, orange), or an E-sample (E,  green),
 which is comprised  of $2^{10}$ distinct  structures, each containing only one base pair.
 For the former two	types of samples,
 we randomly generate $10^3$ sequences of length  $200$.
	For each sequence, we 
	 sample $2^{10}$  structures 
	together with a target structure $s$, according to the Boltzmann 
	or uniform distributions.
	}
	\label{F:SelStrpathLevel}
\end{figure*}

The second assumption is that the target is contained in the sample.
This assumption can be validated by generating  samples of larger size,
and checking whether or not the distinguished structure is reproducible.

Accordingly, the  probability and entropy  of a base pair  is calculated 
in the context of the entire ensemble,
 and thus  the ensemble tree together with  maximum entropy base pairs.
\cite{Garcia:15} show that
 the structural entropy of the entire Boltzmann ensemble  
 is asymptotically  linear in $n$, i.e.  
 $H(\Omega_{\text{entire}})\approx 0.07 n$.
 Since each queried base pair reduces the entropy by approximately $1$
 and the reduction is additive by construction,
the ensemble tree would require approximately $0.07 n$ queries to 
 identify a leaf that has entropy smaller than $1$ and contains a distinguished structure.

\begin{figure}
	\centering
	\includegraphics[width=0.6\textwidth]{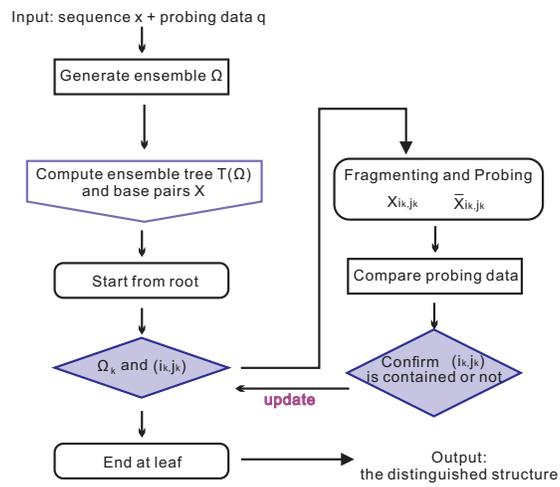}
	\caption
	{\small 
		The workflow diagram of our fragmentation process.
	}
	\label{F:workflow}
\end{figure}

\begin{figure}
	\centering
	\includegraphics[width=0.6\textwidth]{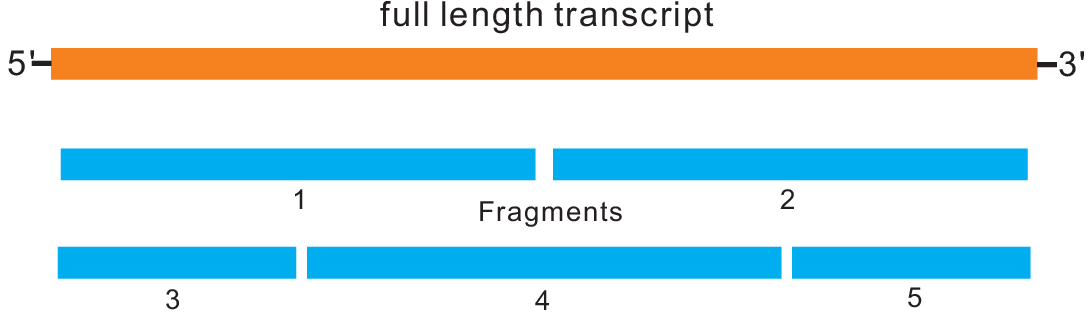}
	\caption
	{\small The  fragmentation by~\cite{Novikova:13}.
	}
	\label{F:fragment}
\end{figure}

Equipped with the ensemble tree and chemical probing,  our framework provides  a  fragmentation process 
combining ''local'' probing profiles with the ''global'' one via modularity. 
For each queried base pair, our fragmentation  subsequently   splits the sequence,
and determines  the presence of base pairs
via comparing probing profiles (Section~\ref{S:modular}).
Fig.~\ref{F:workflow} demonstrates the workflow of the fragmentation process,
see ~\ref{A:5}.
\cite{Novikova:13} developed 
a different fragmentation method
for  determining
the secondary structure of lncRNAs.
Their  approach
applies chemical probing of the entire RNA, followed by probing of
certain overlapping fragments, see Fig.~\ref{F:fragment}.
Regions of each fragment exhibiting similar probing profiles
are folded independently,
and  combined in order to obtain the entire structure.
At a fundamental level, our fragmentation is different from their approach
in that we allow bases from two non-contiguous fragments to pair.
Their  approach  prohibits long-range pairs, such as connecting fragments $3$ and $5$ in Fig.~\ref{F:fragment}.
As a consequence, our method is well suited to deal with
the long-range base pairings,
 whose existence  
has been shown experimentally~\citep{Lai:18} as well as
theoretically~\citep{Li:18,Li:19}.

For a sample of RNA  pseudoknotted structures,
the ensemble tree in our framework can still be computed.
However, the structure modularity no longer holds in the pseudoknot case.
The reason is that a pseudoknot loop could intersect in more than one base pair with other loops,
see Fig.~\ref{F:loop} (RHS).
The fragmentation  with respect to a base pair involved in a pseudoknot could
affect several loops, each contributing to the free energy. 
The  change of loop-based  energy  could lead to  splits  folding into a different configuration
compared to  the  full transcript.
Nevertheless, it would be interesting to find out other experimental methods to 
facilitate our framework for RNA  pseudoknotted structures.

\begin{center}
	{\bf ACKNOWLEDGMENTS} We want to thank Christopher Barrett for stimulating discussions and the staff of the
	Biocomplexity Institute \& Initiative  at University of Virginia for their great support.
We would like to thank Dr. Kevin Weeks for pointing out their recent work~\citep{Mustoe:19}.
 Many thanks to Qijun He, Fenix Huang, Andrei Bura,  Ricky Chen, and  Reza Rezazadegan for discussions.

\end{center}


\begin{center}
	{\bf AUTHOR DISCLOSURE STATEMENT}
\end{center}

The authors declare that no competing financial interests exist.





\appendix
	
	
	\section{RNA secondary structures}
	\label{A:1}
	
Most computational approaches of RNA structure prediction reduce to a class of
	coarse grained structures, i.e.  
	the RNA secondary structures~\citep{Waterman:78s,Waterman:79a,Waterman:78aa,Howell:80,Waterman:93}.
	These are contact structures via
	abstracting from the actual spatial
	arrangement of nucleotides.
	An RNA secondary structure can be represented as a \emph{diagram},   
	a labeled graph over the vertex set $\{1, \dots, n\}$ 
	 whose vertices are arranged in a horizontal line and arcs are drawn in the upper half-plane. 
	Clearly, vertices correspond to nucleotides in the primary sequence 
	and arcs correspond to  the  Watson-Crick \textbf{A-U}, \textbf{C-G} and wobble \textbf{U-G}
	base pairs. 
Two arcs 	$(i_1,j_1)$ and $(i_2,j_2)$ form a pseudoknot if they  cross, i.e. 
 the nucleotides appear in the order $i_1<i_2<j_1<j_2$ in the primary sequence.
	   An
RNA \emph{secondary structure} is  a diagram 
without pseudoknots.

We define two distances for comparing two structures,
the base-pair and signature distances.

The base-pair distance utilizes
a representation of a secondary structure $s$  as a bit string $\mathbf{b}(s)=b_1b_2\ldots b_{l}$,
where $l$ denotes the number of  all possible base pairs,
and $b_k$ is a bit.
Given the arc set $E$ equipped with the lexicographic order,
we define $b_k=1$ if $s$ contains the $k$-th base pair in $E$, otherwise $b_k=0$.
The \emph{base-pair distance} $d_{\text{bp}}(s,s')$  
between two structures $s$ and $s'$  is 
the Hamming distance between their corresponding bit strings $\mathbf{b}(s)$ and $\mathbf{b}(s')$ .


The \emph{$0$-$1$ signature} (or simply signature) of a structure $s$,
is a vector 
$\mathbf{q}(s)=(q_1,q_2,\ldots,q_n)$, 
where $q_k=1$ when the $k$-th base is unpaired in $s$, otherwise $q_k=0$.
The \emph{signature distance} $d_{\text{sn}}(s,s')$  between two structures $s$ and $s'$ is defined as the
Hamming distance between their corresponding $0$-$1$ signatures $\mathbf{q}(s)$ and $\mathbf{q}(s')$.
By construction,   the $0$-$1$ signature  of a secondary structure   mimics
its probing signals,
and the signature distance measures the similarity between the probing profiles of two structures.
By observing that each bit corresponds to two base-pairing end, 
we derive $d_{\text{sn}}(s,s') \leq 2 d_{\text{bp}}(s,s')$ for any  $s$ and $s'$.


\section{Energy model}
\label{A:energy}

Computational prediction of RNA secondary structures is mainly driven
by loop-based energy models~\citep{Mathews:99,Mathews:04}. 
The key
assumption of these approaches 
is that the free energy $E(s)$ of an RNA secondary structure
$s$, is estimated by the sum of energy contributions $E(L) $ from its
individual loops $L$, $E(s)=\sum_{L} E(L)$.

According to thermodynamics,
the free energy reflects not only the overall stability of the structure, 
but also its probability appearing in thermodynamic
equilibrium. This leads to the Boltzmann sampling~\citep{Ding:03,Lorenz:11} of 
secondary structure  based on their equilibrium 
probabilities, whose computation can be facilitated by
the partition function~\citep{McCaskill}.

In this model, the energy contribution of a base pair 
depends on the two adjacent loops that intersect at the base pair, see Fig.~\ref{F:loop} (LHS).
Note that, in a pseudoknot, since two adjacent loops may intersect at several base pairs,
and thus the energy contribution of a base pair could affect several loops, see Fig.~\ref{F:loop} (RHS).

\begin{figure*}
	\begin{tabular}{cc}
		\includegraphics[width=0.45\textwidth]{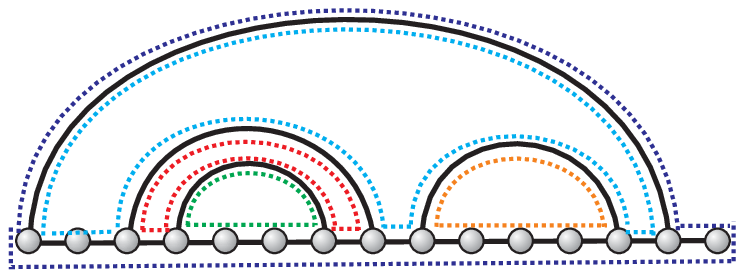}&
		\includegraphics[width=0.45\textwidth]{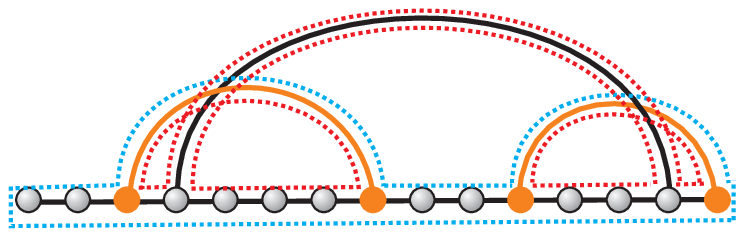}
		\\
	\end{tabular}
	\caption
	{\small  The loop-based decomposition of a secondary structure (LHS) and a pseudoknot (RHS).
		LHS: two adjacent loops intersect at one base pair. RHS: two pseudoknot loops meet at two base pairs (orange).
	}
	\label{F:loop}
\end{figure*}


\section{Chemical probing}
\label{A:probe}

The basic idea of RNA structure probing
is that  chemical probes react differently with paired or unpaired nucleotides.
More reactive regions of the RNA are likely to be single
stranded and less reactive regions are likely to be base paired. 
Thus every nucleotide in a folded RNA sequence can be assigned a reactivity score,
which depends on the type of chemical or enzymatic  footprinting experiments and 
the strength of the reactivity. 
It is rarely of absolute certainty, whether or not a specific position is unpaired, or paired;
instead, the method produces a probability. The probing data thus produce a vector of probabilities. 
Several competing methods have been developed to convert the footprinting data for each nucleotide
into a probability.
Probing data has been further incorporated into  RNA folding algorithms by adding a pseudo-energy
term, $\Delta G(s) $, to the free energy~\citep{Deigan:09, Washietl:12,Zarringhalam:12}, i.e. 
\[
E_{\text{probe}}(s)= E(s)+ \Delta G(s).
\]
This term engages in the folding process as follows: 
while positions where structure prediction and experiment data agree with
each other are rewarded by a negative pseudo-energy,
mismatching  locations receive a  penalty by way of a positive term.
This is tantamount to shifting the partition function in such a way that
 the equilibrium distribution of
structures in $\Omega_{\text{probe}}$ 
favors those that agree with the data.


\section{$q$-Boltzmann sampler
}
\label{S:probedata}

Here we incorporate the signature of a target via restricted Boltzmann sampling structures with the signature distance filtration.

We first analyze the signature distances in two classes of Boltzmann samples,
one being unrestricted, $\Omega$, and the other being restricted  
 $\Omega_{\text{probe}}$ that incorporates the signature of the target via pseudo-energies.

For both types of samples,
the distribution of the signature  distance 
between the target $s$ and the  ensemble is approximately normal, Fig.~\ref{F:SnDistESB}. 
The means and variances  of the normalized signature  distance are shown in Table~\ref{Tab:SnDistESB}.
It shows that, while the average signature  distance between the target and  the  unrestricted sampled structure 
is around $0.21 n$,
integrating the  signature of the target reduces the distance to $0.03 n$.
This indicates that the incorporation of the signature improves the accuracy of the Boltzmann sampler 
identifying  the target.

\begin{figure*}
	\begin{tabular}{cc}
	\includegraphics[width=0.45\textwidth]{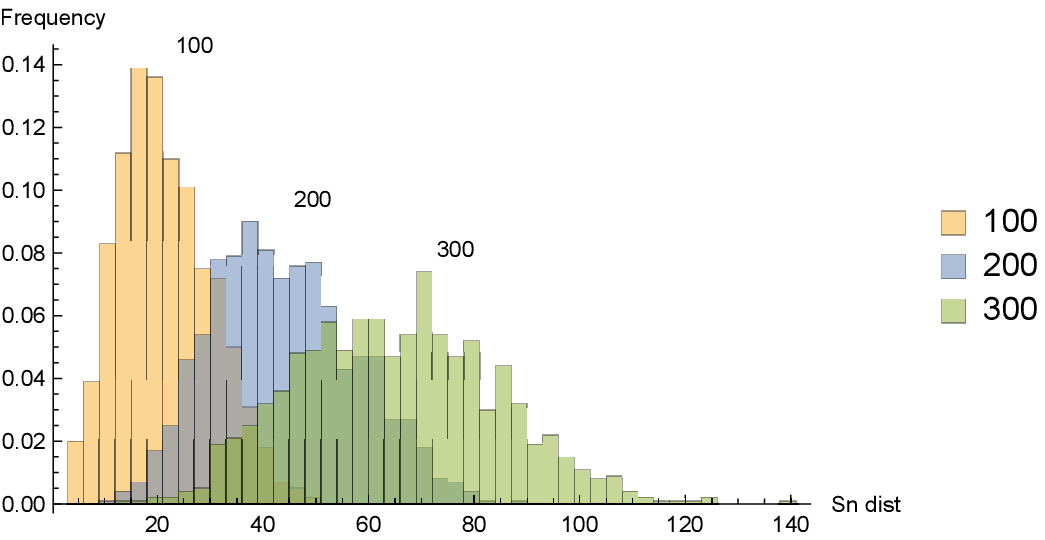}&
\includegraphics[width=0.45\textwidth]{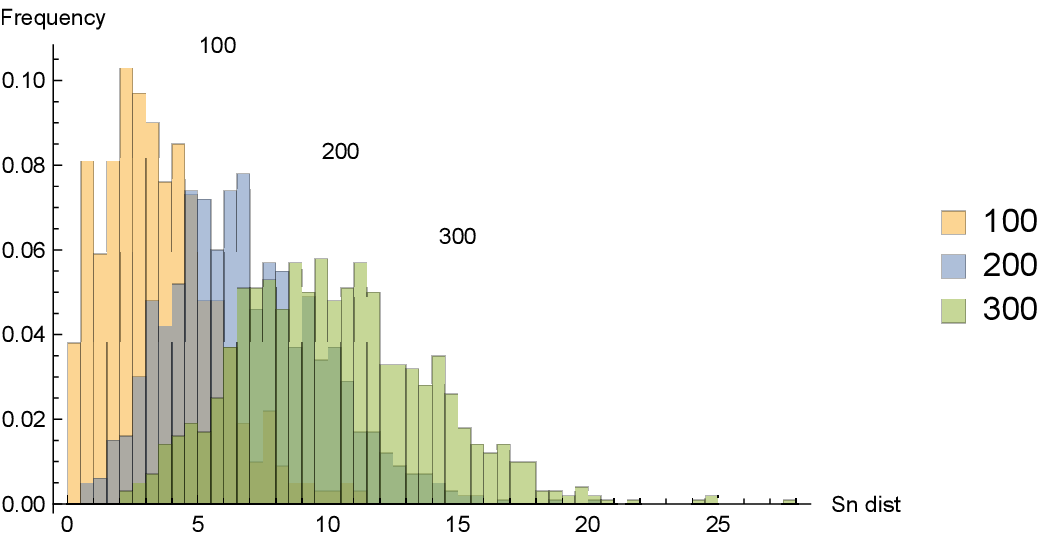}
	\\
\end{tabular}
	\caption
	{\small  The distributions of the signature  distances $d_{\text{sn}}(s,\Omega)$ and $d_{\text{sn}}(s,\Omega_{\text{probe}})$
		between the target $s$ and two types of Boltzmann ensembles. LHS utilizes unrestricted samples $\Omega$  of structures, and RHS uses
		samples  $\Omega_{\text{probe}}$  incorporating the signature of the target  structure $s$.
	}
	\label{F:SnDistESB}
\end{figure*}

\begin{table}[htbp]
	\caption{The means and variances of the  normalized signature  distances 
		between the target $s$ and the Boltzmann samples $\Omega$, $\Omega_{\text{probe}}$ or $\Omega^q$.
For $\Omega$ and $\Omega_{\text{probe}}$,	 we utilize the same Boltzmann samples as described in Fig.~\ref{F:SnDistESB}.
		Values following the $\pm$ symbols are the standard deviation of the sampling errors. } \label{Tab:SnDistESB}
	\begin{tabular}{cccc}
		\hline\noalign{\smallskip}
		\small 	&\small $n=100$   &\small $n=200$  & \small $n=300$ \\
		\noalign{\smallskip}\hline\noalign{\smallskip}
		\small $d_{\text{sn}}(s,\Omega)/n$	&\small $0.214 \pm 0.088$             
		&\small $0.219\pm 0.068$ 			 
		&\small $0.217 \pm 0.063$         		         	\\    
		\small $d_{\text{sn}}(s,\Omega_{\text{probe}})/n$	&\small $0.035 \pm 0.021$             
		&\small  $0.034\pm 0.015$ 			 
		&\small $0.034\pm 0.012$        		         	\\    
			\small $d_{\text{sn}}(s,\Omega^{0.05})/n$	&\small $0.031 \pm 0.008$             
		&\small  $0.038\pm 0.012$ 			 
		&\small $0.037\pm 0.018$        		         	\\    
					\small $d_{\text{sn}}(s,\Omega^{0.1})/n$	&\small $0.074 \pm 0.014$             
		&\small  $0.080\pm 0.015$ 			 
		&\small $0.087 \pm 0.010$        		         	\\    
					\small $d_{\text{sn}}(s,\Omega^{0.15})/n$	&\small $0.098 \pm 0.021$             
		&\small  $0.116 \pm 0.018$ 			 
		&\small $0.123 \pm 0.011$        		         	\\   
						\small $d_{\text{sn}}(s,\Omega^{0.20})/n$	&\small $0.127 \pm 0.034$             
		&\small  $0.144 \pm 0.027$ 			 
		&\small $0.157 \pm 0.020$        		         	\\     
							\small $d_{\text{sn}}(s,\Omega^{0.25})/n$	&\small $0.144 \pm 0.043$             
		&\small  $0.167 \pm 0.038$ 			 
		&\small $0.180 \pm 0.029$        		         	\\     
		\noalign{\smallskip}\hline\noalign{\smallskip}	            
	\end{tabular}
\end{table}

The above analysis motivates us to
 introduce a $q$-Boltzmann sampler for structures with signature distance  filtration.
For any fraction $q\in (0,1)$, let $\Omega^q$ denote the  restricted Boltzmann ensemble of 
structures having signature distance to the target at most $q\cdot n$,
i.e., $\Omega^q=\{ s'| d_{\text{sn}}(s',s) \leq q\cdot n \}$.
The enhanced Boltzmann sampling can be implemented 
by partition function~\citep{McCaskill} and stochastic
backtracking technique~\citep{Ding:03},
with the augmentation via an additional index recording the signature distance.
A complete description of the new sampler will be provided in a future publication.
The constraint on the signature distance  changes the equilibrium distribution of
structures via eliminating those that are inconsistent with signature
over certain ratio $q$.
Table~\ref{Tab:SnDistESB} shows the means and variances  of the normalized signature  distance for $\Omega^q$.
In particular, we observe that  Boltzmann samples $\Omega_{\text{probe}}$ incorporating the probing data via
pseudo-energies behave similarly as $q$-samples having $q=0.05$.

\section{State-of-the-art experimental approaches}
\label{S:oracle}

Determination of base pairs  is a fundamental and longstanding problem in RNA
biology.
A large variety of experimental approaches  have been developed 
to provide reliable solutions to the problem,
such as X-ray crystallography, nuclear
magnetic resonance (NMR), cryogenic electron microscopy (cryo-EM), 
chemical and enzymatic probing, cross-linking~\citep{Shi:14,Bothe:11,Bai:15,Weeks:15}.
Each method has certain strengths and limitations.
In particular, chemical probing, as one of the most widely accepted experiments,
allows to detect RNA duplexes \textit{in vitro} and \textit{in vivo}, 
and has been combined with high-throughput sequencing 
to facilitate large-scale  analysis on lncRNAs~\citep{Weeks:15}.
Thus, in the following, we focus on determining the queried base pairs via chemical probing.

Chemical probing data is \emph{one-dimensional}, i.e. 
it does not specify base pairing partners.
Thus probing data itself does not directly detect base pairings,
and any structure information can only be \emph{inferred} based on
compatibility with probing data. 
Two strategies of structural inference have been developed, correlation analysis
and mutate-and-map.
\cite{Mustoe:19} introduce 
PAIR-MaP, which utilizes mutational profiling 
as a sequencing approach 
and correlation analysis on profiles. The authors claim that PAIR-MaP provides 
around $0.90$ accuracy of structure modeling 
(on average,  sensitivity $0.96$ and false discovery rate $0.03$).
\cite{Cheng:17}  introduce M2-seq, a mutate-and-map approach combined with next generation
sequencing, which recovers duplexes with a low false discovery rate ($<0.05$).


\section{Structural entropy}
\label{A:3}

\begin{proposition}\label{P:bound}
	Let $\Omega'$ be a sample of size $N$ and $s\in \Omega'$ be a structure having probability
	$p_0$. Then the structural entropy of $\Omega'$ is bounded by 
	\begin{equation*}
	H_{\min}(p_0) \leq H(\Omega') \leq H_{\max} (p_0),
	\end{equation*}
	where
	\begin{align*}
	H_{\min}(p_0) &=-p_0 \log_2 p_0 -(1-p_0)  \log_2 (1-p_0),\\
	H_{\max}(p_0) &=-p_0 \log_2 p_0+(1-p_0)  \log_2 N.
	\end{align*}
\end{proposition}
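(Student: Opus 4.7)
\textbf{Proof proposal for Proposition~\ref{P:bound}.}
Let $T$ be the random variable taking values in the distinct structures of $\Omega'$ with $\mathbb{P}(T=t)=p(t)$, so that $H(\Omega')=H(T)=-\sum_{t}p(t)\log_2 p(t)$. The plan is to bound $H(T)$ from below by a coarse-graining argument and from above by exploiting the minimum possible probability forced by the sample size $N$.

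For the lower bound, my approach is to collapse the distribution onto the partition $\{s\}$ vs.\ $\Omega'\setminus\{s\}$. Define $Y=\mathbf{1}_{\{T=s\}}$, so $Y$ is Bernoulli with parameter $p_0$ and $H(Y)=-p_0\log_2 p_0-(1-p_0)\log_2(1-p_0)=H_{\min}(p_0)$. Since $Y$ is a deterministic function of $T$, the chain rule gives $H(T)=H(T,Y)=H(Y)+H(T\mid Y)\geq H(Y)$, because the conditional entropy is non-negative. This yields $H(\Omega')\geq H_{\min}(p_0)$, with equality precisely when all the probability mass on $\Omega'\setminus\{s\}$ is concentrated on a single structure.

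For the upper bound, the key observation is that every structure appearing in $\Omega'$ has multiplicity at least one, so its probability is at least $1/N$. Consequently $-\log_2 p(t)\leq \log_2 N$ for every $t\neq s$ in $\Omega'$. Splitting off the $s$-term and using this pointwise bound,
\begin{equation*}
H(\Omega')=-p_0\log_2 p_0-\sum_{t\neq s}p(t)\log_2 p(t)\leq -p_0\log_2 p_0+\log_2 N\sum_{t\neq s}p(t)=-p_0\log_2 p_0+(1-p_0)\log_2 N,
\end{equation*}
which is exactly $H_{\max}(p_0)$. Equality is approached when the remaining $(1-p_0)$ mass is spread as uniformly as possible across distinct singletons, forcing each such $p(t)$ down to $1/N$.

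I do not anticipate a genuine obstacle: both directions reduce to standard entropy manipulations once the two structural facts are isolated, namely that coarsening reduces entropy and that the sample-size $N$ bounds $p(t)$ away from zero. The only subtlety worth flagging in the write-up is to note that for the upper bound the estimate is sharp only when the non-$s$ mass is split into distinct structures of multiplicity one, which is why the bound retains the clean factor $\log_2 N$ rather than a tighter expression involving $-(1-p_0)\log_2(1-p_0)$.
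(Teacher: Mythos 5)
Your proof is correct, but it reaches both bounds by a genuinely different route than the paper. For the upper bound the paper argues extremally: it asserts, via concavity of $-x\log_2 x$, that the entropy is maximized when the remaining $N-\lfloor p_0N\rfloor$ sampled structures are all distinct, each then carrying probability $1/N$, and evaluates the entropy of that extremal configuration. You instead use the pointwise estimate $-\log_2 p(t)\leq \log_2 N$, valid because every structure present in a sample of size $N$ has multiplicity at least one and hence probability at least $1/N$; summing this against the leftover mass $1-p_0$ gives $H_{\max}(p_0)$ directly, with no appeal to an optimization over configurations. For the lower bound the paper again exhibits the extremal case (all remaining structures identical), whereas you derive it from the chain rule applied to the coarse-graining $Y=\mathbf{1}_{\{T=s\}}$, i.e.\ $H(T)\geq H(Y)=H_{\min}(p_0)$ since $Y$ is a function of $T$. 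Your argument is arguably tighter logically: the paper's proof leaves the justification that the asserted configurations really are extremal implicit, while your two inequalities are immediate; what the paper's version buys in exchange is that it makes the equality cases explicit by construction. One cosmetic note: your equality discussion for the upper bound tacitly assumes $(1-p_0)N$ is an integer so the leftover mass can be split into singletons, which matches the paper's implicit use of $\lfloor p_0 N\rfloor$, but this affects only sharpness, not the validity of the inequality.
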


\begin{proof}
	By construction, the multiplicity of $s$ in $\Omega'$ is given by $p_0^N=\lfloor p_0 N\rfloor$.
	Since the function $-x  \log_2  x$ is for $x>0$ concave, the structural entropy is maximal in
	case of all remaining $N-p_0^N $ structures being distinct, i.e.~each occurs with probability
	$(1-p_0)/(N-p_0^N)=1/N$.
	Therefore
	\begin{align*}
	H_{\max}(p_0) &=-p_0 \log_2 p_0 -\sum_{N-p_0^N}  \frac{1}{N}\log_2  \frac{1}{N}\\
	&=-p_0 \log_2 p_0+(1-p_0)  \log_2 N.
	\end{align*}
	On the other hand, the minimum is achieved 
	when all remaining  structures are the same. Thus
	$	 H_{\min}(p_0)  =-p_0 \log_2 p_0 -(1-p_0)  \log_2 (1-p_0)$. 
\end{proof}

Now we prove Proposition~\ref{C:bound2}.

\begin{proof}[Proof of Proposition~\ref{C:bound2}]
	Let $s_0$ be the structure having the highest probability $p_0$ in $\Omega'$.
	By Proposition~\ref{P:bound},
	we have 
	\begin{equation}\label{Eq:ineq1}
	H_{\min}(p_0) \leq E.
	\end{equation}
	Inspection of the graph of $H_{\min}(p)$ as a function  of $p$, we conclude, that for $E<1$,
	two solutions of the equation $H_{\min}(p)=E$ exist, one being for $f(E)> 0.5$ and the other
	for $g(E) < 0.5$, see Fig.~\ref{F:Hp}. In case of $E=1$, we have the unique solution,
	$f(E)=g(E)=0.5$. Since $H_{\min}(p)$ is monotone over $[0,0.5]$ and $[0.5,1]$,
	inequality~(\ref{Eq:ineq1}) implies 
	$$
	p_0 \geq f(E) \quad \text{\rm  or}\quad    p_0 \leq g(E).
	$$
	We shall proceed by excluding $p_0 \leq g(E)$. A contradiction, suppose that $p_0<0.5$
	and that structures in $\Omega'$ are arranged in descending order according to their
	probabilities  $p_i $ for $i=0,1,\ldots, k$. Since each structure in $\Omega'$  has
	probability smaller than $0.5$, the sample $\Omega'$ contains at least three different
	structures, i.e.  $k\geq 2$. By construction, we have $p_i \leq p_0 <0.5 $.
	Now we consider the following optimization problem
	\begin{align*}
	\min_{p_i} \quad &\sum_{i=0}^k p_i\log_2 p_i \\
	\textrm{s.t.} \quad & \sum_{i=0}^k p_i=1 \\
	&0\leq p_k \leq p_{k-1}\leq \cdots \leq p_0 \leq 0.5.  \\
	\end{align*}
	We inspect that the multivariate function $\sum_{i=0}^k p_i\log_2 p_i$ reaches  its minimum $1$
	only for $ p_0=p_1=0.5$ and $p_i=0$ for $i \geq 2$.
	In the case of $ p_0<0.5$, the minimum cannot be reached and we arrive at some $E>1$, in
	contradiction to our assumption $E\leq 1$. Therefore $p_0 \geq f(E)$ is the only possible scenario,
	i.e., $\Omega'$ contains a distinguished structure with probability at least $f(E)$.
\end{proof}

\begin{figure}
	\centering
	\includegraphics[width=0.5\textwidth]{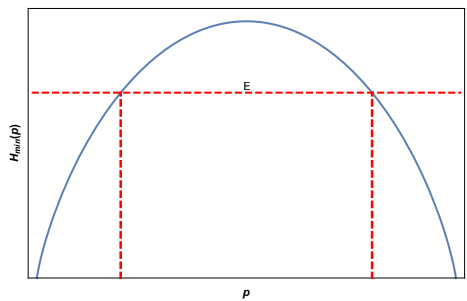}
	\caption
	{\small The graph of $H_{\min}(p)$ as a function  of $p$.
	}
	\label{F:Hp}
\end{figure}


\section{Information theory}
\label{A:4}
As the Boltzmann ensemble is a particular type of discrete probability spaces,
the information-theoretic results on the ensemble trees will be stated in the more general setup.
Let  $\Omega=(\mathcal{S},\mathcal{P}(\mathcal{S}),p)$ be a discrete probability space
consisting of the sample space $\mathcal{S}$, its power set $\mathcal{P}(\mathcal{S})$ as the $\sigma$-algebra
and the probability measure $p$.
The \emph{Shannon entropy} of $\Omega$ is given by 
\begin{equation*}
H(\Omega) = -\sum_{s\in \mathcal{S}} p(s) \log_2 p(s),
\end{equation*}
where the units of $H$ are
in bits.

A \emph{feature} $X$ is a discrete random variable defined on $\Omega$. 
Assume that $X$ has a finite number of values $x_1,x_2,\ldots,x_k$. Set $q_i=\mathbb{P}(X=x_i)$. 
The \emph{Shannon entropy} $H(X)$  of the feature $X$ is given by
\begin{equation*}
H(X) = -\sum_{i} q_i \log_2 q_i.
\end{equation*}

In particular, the values of $X$ define a partition of  $ \mathcal{S}$ 
into disjoint subsets $ \mathcal{S}_i=\{s\in  \mathcal{S}:X(s)=x_i\}$, for $1\leq i \leq k$.
This further induces $k$ spaces $\Omega_i=(\mathcal{S}_i,\mathcal{P}(\mathcal{S}_i),p_i)$,
where the induced distribution is given by
\[
p_i(s)=\frac{p(s)}{q_i} \quad \text{ for } s\in  \mathcal{S}_i,
\]
and $q_i$ denotes the probability of $X$ having value $x_i$ and is given by
\begin{equation*}
q_i =\mathbb{P}(X=x_i)= \sum_{s\in  \mathcal{S}_i} p(s).
\end{equation*}
Let $H(\Omega|X) $ denote the \emph{conditional entropy} of $\Omega$ given the value of feature $X$. 
The entropy $H(\Omega|X) $ gives the expected value of the entropies of the conditional distributions on $\Omega$, 
averaged over the conditioning feature $X$ and  can be computed by 
\begin{equation*}
H(\Omega|X)= \sum_i q_i H(\Omega_i).
\end{equation*}

Then the \emph{entropy reduction} 
$R(\Omega,X)$ of $\Omega$ for feature $X$ 
is the difference between the \textit{a priori} Shannon entropy $H(\Omega)$ and the conditional entropy $H(\Omega|X) $, i.e. 
\begin{equation*}
R(\Omega,X)= H(\Omega)-H(\Omega|X). 
\end{equation*}
The entropy reduction indicates the change on average in information entropy from a prior state to a state that takes some information as given.

Now we prove Propositions~\ref{P:1} and~\ref{P:subspace}. 

\begin{proof}[Proof of Proposition~\ref{P:1}]
	\begin{align*}
	H(\Omega|X)&= \sum_i q_i H(\Omega_i)\\
	&= - \sum_i q_i \sum_{s\in \mathcal{S}_i} p_i(s) \log_2 p_i(s)\\
	&=- \sum_i q_i \sum_{s\in \mathcal{S}_i} \frac{p(s)}{q_i} \log_2\frac{p(s)}{q_i}\\
	&=- \sum_i  \sum_{s\in \mathcal{S}_i} p(s) ( \log_2 p(s)- \log_2 q_i)\\
	&= - \sum_i  \sum_{s\in \mathcal{S}_i} p(s)  \log_2 p(s)+
	\sum_i \log_2 q_i \sum_{s\in \mathcal{S}_i} p(s) \\
	&=-\sum_{s\in \mathcal{S}} p(s) \log_2 p(s)+\sum_{i} q_i \log_2 q_i\\
	&=H(\Omega) -H(X).
	\end{align*}	
	Therefore eq.~(\ref{Eq:IR}) follows.
\end{proof}

\begin{proof} [Proof of Proposition~\ref{P:subspace}]
	By definition,
	$$	\mathbb{P}(\Omega_1^{i,j})=\sum_{s \in \Omega_1^{i,j}} p(s)
	=\mathbb{P}(X_{i,j}(s)=1)
	=p_{i,j}.$$
	Similarly, we have $\mathbb{P}(\Omega_0^{i,j})=1-p_{i,j}$. Thus $|\mathbb{P}(\Omega_0^{i,j}) -\mathbb{P}(\Omega_1^{i,j}) |=|1-2 p_{i,j}|$
	is strictly decreasing on $p_{i,j}  \in [0,1/2]$ and strictly increasing on $[1/2,1]$.
	Meanwhile,  the function $H(X_{i,j})=-p_{i,j}  \log_2 p_{i,j}  -(1-p_{i,j} )  \log_2 (1-p_{i,j} )$ 
	is strictly increasing on $p_{i,j}  \in [0,1/2]$ and   symmetric with respect to $p_{i,j} =1/2$.
	Therefore, $|\mathbb{P}(\Omega_0^{i,j}) -\mathbb{P}(\Omega_1^{i,j}) |$ reaches its minimum 
	when  $H(X_{i,j})$ has the maximum value, that is,  $X_{i_0,j_0}$.
	
	Assertion (2) follows directly from   Proposition~\ref{P:1}.
\end{proof}

Given two features $X_1$ and $X_2$, we can  partition $\Omega$ either   first by $X_1$ and  subsequently  by  $X_2$,
or  first by $X_2$ and then  by  $X_1$, or just by  a pair of features $(X_1,X_2)$.  
In the following, we will show that all three approaches provide the same entropy reduction of $\Omega$.

Before the proof, we define some notations. 
The joint probability distribution of  a pair of features $(X_1,X_2)$ is given by 
$q_{i_1,i_2}= \mathbb{P}(X_1=x^{(1)}_{i_1},X_2=x^{(2)}_{i_2})$, 
and the marginal probability distributions are given by
$q^{(1)}_{i_1}=\mathbb{P}(X_1=x^{(1)}_{i_1})$ and $q^{(2)}_{i_2}=\mathbb{P}(X_2=x^{(2)}_{i_2})$.
Clearly, $\sum_{i_1}q_{i_1,i_2}=q^{(2)}_{i_2} $ and $\sum_{i_2}q_{i_1,i_2}=q^{(1)}_{i_1} $.
The \emph{joint entropy} $H (X_1,X_2) $ of a pair  $(X_1,X_2)$ is defined as
\begin{equation*}
H (X_1,X_2)= -\sum_{i_1} \sum_{i_2}  q_{i_1,i_2}  \log_2 q_{i_1,i_2}.
\end{equation*}

The \emph{conditional entropy} $H(X_2|X_1)$ of a feature $X_2$ given $X_1$ is 
defined as the expected value of  the entropies of  the conditional distributions $X_2$, 
averaged over the conditioning feature $X_1$, i.e. 
\begin{equation*}
H(X_2|X_1)= \sum_{i_1}  \mathbb{P}(X_1=x^{(1)}_{i_1})  H(X_2|X_1=x^{(1)}_{i_1}).
\end{equation*}

\begin{proposition}[Chain rule,~\cite{Cover:06}]\label{P:chain}
	\begin{equation}\label{Eq:chain}
	H(X_1,X_2)=H(X_1)+ H(X_2|X_1).
	\end{equation}
\end{proposition}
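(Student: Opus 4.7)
The plan is to establish the chain rule by direct manipulation of the definitions of joint entropy, marginal entropy, and conditional entropy, in the classical Shannon style. Since the random variables $X_1$ and $X_2$ take finitely many values, all the sums involved are finite and no convergence issues arise; we only need the standard convention $0 \log_2 0 = 0$ to handle pairs $(i_1,i_2)$ with $q_{i_1,i_2}=0$, which is consistent with the limit $x \log_2 x \to 0$ as $x \to 0^+$.

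The first step is to factor the joint distribution via conditional probabilities. By the definition of conditional probability, whenever $q^{(1)}_{i_1}>0$ we may write
\[
q_{i_1,i_2} \;=\; q^{(1)}_{i_1}\cdot \mathbb{P}(X_2=x^{(2)}_{i_2}\mid X_1=x^{(1)}_{i_1}),
\]
and (contributions with $q^{(1)}_{i_1}=0$ are zero and may be ignored). Taking logarithms turns the product into a sum, so substituting into the definition of $H(X_1,X_2)$ yields
\[
H(X_1,X_2) \;=\; -\sum_{i_1,i_2} q_{i_1,i_2}\bigl[\log_2 q^{(1)}_{i_1} + \log_2 \mathbb{P}(X_2=x^{(2)}_{i_2}\mid X_1=x^{(1)}_{i_1})\bigr].
\]

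The second step is to separate the right-hand side into two pieces. For the first piece, pulling $\log_2 q^{(1)}_{i_1}$ out and summing over $i_2$ uses the marginal identity $\sum_{i_2} q_{i_1,i_2}=q^{(1)}_{i_1}$, producing exactly $-\sum_{i_1} q^{(1)}_{i_1}\log_2 q^{(1)}_{i_1}=H(X_1)$. For the second piece, grouping the sum over $i_2$ inside the sum over $i_1$ and using $q_{i_1,i_2}=q^{(1)}_{i_1}\,\mathbb{P}(X_2=x^{(2)}_{i_2}\mid X_1=x^{(1)}_{i_1})$ once more identifies it with
\[
\sum_{i_1} q^{(1)}_{i_1}\,H(X_2\mid X_1=x^{(1)}_{i_1}) \;=\; H(X_2\mid X_1),
\]
by the definition of conditional entropy stated in the excerpt. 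Adding the two pieces gives $H(X_1,X_2)=H(X_1)+H(X_2\mid X_1)$, as required.

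There is no genuine obstacle here; the argument is purely bookkeeping. The only care needed is in handling the zero-probability entries consistently with $0\log_2 0 = 0$ and in verifying the interchange of the two summations, both of which are routine in the finite setting at hand.
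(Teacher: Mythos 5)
Your proof is correct: it is the standard factor-the-joint-distribution argument, with the $0\log_2 0=0$ convention and the marginal identity $\sum_{i_2}q_{i_1,i_2}=q^{(1)}_{i_1}$ handled properly. The paper itself gives no proof of this proposition---it simply cites \cite{Cover:06}---and your derivation is precisely the textbook argument that citation refers to, so there is nothing to reconcile.
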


\begin{proposition}\label{P:2}
	Let  $R(\Omega,X_1,X_2)$ denote   the \emph{entropy reduction} of $\Omega$ first by the feature $X_1$ and then by the feature $X_2$, 
	and $R(\Omega,(X_1,X_2))$ denote   the \emph{entropy reduction} of $\Omega$  by a pair of features $(X_1,X_2)$. Then
	\begin{equation}\label{Eq:IR2}
	R(\Omega,X_1,X_2) = R(\Omega,(X_1,X_2)).
	\end{equation}
\end{proposition}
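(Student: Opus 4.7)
The plan is to decompose both sides of the identity into entropies of the features themselves, and then invoke the chain rule (Proposition~\ref{P:chain}) to conclude equality. The conceptual content is that ``reducing by $X_1$ then by $X_2$'' should aggregate the same information gain as ``reducing by the joint feature $(X_1,X_2)$,'' and Proposition~\ref{P:1} already tells us that a single-step reduction equals the Shannon entropy of the querying feature.

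First I would handle the right-hand side. Viewing $(X_1,X_2)$ as a single discrete feature with joint distribution $q_{i_1,i_2}$, Proposition~\ref{P:1} (applied verbatim to this joint feature) yields
\begin{equation*}
R(\Omega,(X_1,X_2)) \;=\; H(X_1,X_2).
\end{equation*}
This is immediate once one observes that the bipartition argument of Proposition~\ref{P:1} was written for arbitrary discrete features, not just Bernoulli ones, so the partition of $\mathcal S$ by the level sets $\{s:X_1(s)=x^{(1)}_{i_1},\,X_2(s)=x^{(2)}_{i_2}\}$ behaves identically.

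Next I would handle the left-hand side in two steps. Applying Proposition~\ref{P:1} once, the entropy reduction from querying $X_1$ on $\Omega$ is $H(X_1)$, and $X_1$ partitions $\Omega$ into sub-spaces $\Omega_{i_1}$ with weights $q^{(1)}_{i_1}$. On each $\Omega_{i_1}$, querying $X_2$ further reduces the entropy by $H(X_2 \mid X_1 = x^{(1)}_{i_1})$, again by Proposition~\ref{P:1}. Averaging over $i_1$ with weights $q^{(1)}_{i_1}$ — which is exactly what the definition of sequential entropy reduction dictates, since one weights each sub-space by its probability mass — gives the second-stage contribution
\begin{equation*}
\sum_{i_1} q^{(1)}_{i_1}\, H(X_2 \mid X_1 = x^{(1)}_{i_1}) \;=\; H(X_2 \mid X_1).
\end{equation*}
Hence
\begin{equation*}
R(\Omega,X_1,X_2) \;=\; H(X_1) + H(X_2 \mid X_1).
\end{equation*}

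Finally, the chain rule of Proposition~\ref{P:chain} gives $H(X_1,X_2) = H(X_1) + H(X_2 \mid X_1)$, so both sides of \eqref{Eq:IR2} equal $H(X_1,X_2)$, completing the proof. The only subtlety — which I expect to be the main bookkeeping obstacle rather than a genuine difficulty — is writing down carefully that the induced probability on the sub-space $\Omega_{i_1}$ is the conditional law $p(\cdot)/q^{(1)}_{i_1}$, so that the entropy of $X_2$ computed on $\Omega_{i_1}$ is literally $H(X_2\mid X_1=x^{(1)}_{i_1})$; once this identification is made, the rest reduces to summing the two stages and applying the chain rule.
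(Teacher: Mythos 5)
Your proof is correct, and its treatment of the left-hand side differs from the paper's. Both arguments dispose of the right-hand side identically, by applying Proposition~\ref{P:1} to the joint feature $(X_1,X_2)$ (which is legitimate, since the paper proves Proposition~\ref{P:1} for arbitrary finite-valued features) to get $R(\Omega,(X_1,X_2))=H(X_1,X_2)$. For the left-hand side, however, the paper performs a single direct computation: it expands $H((\Omega|X_1)|X_2)=\sum_{i_1,i_2}q_{i_1,i_2}H(\Omega_{i_1,i_2})$ in terms of $p(s)$ and $q_{i_1,i_2}$ and telescopes the sums to land on $H(X_1,X_2)$ outright — in effect re-proving Proposition~\ref{P:1} for the doubly refined partition, and never actually invoking the chain rule it states as Proposition~\ref{P:chain}. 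You instead apply Proposition~\ref{P:1} stagewise: once on $\Omega$ to get $H(X_1)$, and once on each cell $\Omega_{i_1}$ to get $H(X_2\mid X_1=x^{(1)}_{i_1})$ (correctly identifying the induced law $p(\cdot)/q^{(1)}_{i_1}$ as the conditional law), then average with weights $q^{(1)}_{i_1}$ and close with the chain rule $H(X_1)+H(X_2\mid X_1)=H(X_1,X_2)$. The one step worth writing out explicitly is the weighted telescoping identity $H(\Omega)-\sum_{i_1,i_2}q_{i_1,i_2}H(\Omega_{i_1,i_2})=R(\Omega,X_1)+\sum_{i_1}q^{(1)}_{i_1}R(\Omega_{i_1},X_2)$, which uses $q_{i_1,i_2}/q^{(1)}_{i_1}=\mathbb{P}(X_2=x^{(2)}_{i_2}\mid X_1=x^{(1)}_{i_1})$; once that is recorded, your argument is complete. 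Your route is more modular and generalizes immediately to $k$ successive features by induction, whereas the paper's direct expansion is self-contained but repeats the bookkeeping of Proposition~\ref{P:1}; it also explains why the paper bothered to state the chain rule at all.
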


\begin{proof}
	By Proposition~\ref{P:1}, we have 
	\begin{align*}
	R(\Omega,X_1) & = H(X_1),\\
	R(\Omega,(X_1,X_2))&= H(X_1,X_2).
	\end{align*}
	Let  $\Omega_{i_1}$ denote the spaces obtained by partitioning $\Omega$ via $X_1$, i.e. 
	$\Omega_{i_1}=(\mathcal{S}_{i_1},\mathcal{P}(\mathcal{S}_{i_1}),p_{i_1})$,
	where $ \mathcal{S}_{i_1}=\{s\in  \mathcal{S}:X_1(s)=x^{(1)}_{i_1}\}$, 
	and
	\[
	p_{i_1}(s)=\frac{p(s)}{q^{(1)}_{i_1}}, \quad \text{ for } s\in  \mathcal{S}_{i_1},
	\]
	where $q^{(1)}_{i_1}=\mathbb{P}(X_1=x^{(1)}_{i_1})$.
	Then the space $\Omega_{i_1}$  is further partitioned into $\Omega_{i_1,i_2}$ via $X_2$.
	That is, $\Omega_{i_1,i_2}=(\mathcal{S}_{i_1,i_2},\mathcal{P}(\mathcal{S}_{i_1,i_2}),p_{i_1,i_2})$,
	where $ \mathcal{S}_{i_1,i_2}=\{s\in  \mathcal{S}_{i_1}:X_2(s)=x^{(2)}_{i_2}\}$, 
	and
	\[
	p_{i_1,i_2}(s)=\frac{p_{i_1}(s)}{\mathbb{P}(X_2=x^{(2)}_{i_2} |X_1=x^{(1)}_{i_1})} =  \frac{\frac{p(s)}{q^{(1)}_{i_1}}  }{ \frac{q_{i_1,i_2}}{q^{(1)}_{i_1}}}= \frac{p(s)}{q_{i_1,i_2}},   \quad \text{ for } s\in  \mathcal{S}_{i_1,i_2}.
	\]
	The entropy reduction $R(\Omega,X_1,X_2)$ is given by the difference between 
	the \textit{a priori} Shannon entropy $H(\Omega)$ and the conditional entropy $H((\Omega|X_1)|X_2)$,
	which is the expected value of the entropies of 
	$\Omega_{i_1,i_2}$, weighted by the probability 
	$\mathbb{P}(s\in  \mathcal{S}_{i_1,i_2}) =\mathbb{P}(X_2=x^{(2)}_{i_2}, X_1=x^{(1)}_{i_1}) = q_{i_1,i_2}$.
In view of Proposition~\ref{P:1}, we derive
	\begin{align*}
	R(\Omega,X_1,X_2)& = H(\Omega)-H((\Omega|X_1)|X_2)\\
	&= H(\Omega) -  \sum_{i_1,i_2} \mathbb{P}(s\in  \mathcal{S}_{i_1,i_2}) H(\Omega_{i_1,i_2})\\
	&= H(\Omega) +\sum_{i_1,i_2}  \mathbb{P}(s\in  \mathcal{S}_{i_1,i_2})  \sum_{s\in  \mathcal{S}_{i_1,i_2} } p_{i_1,i_2}(s) \log_2 p_{i_1,i_2}(s) \\
	&= H(\Omega) +\sum_{i_1,i_2}q_{i_1,i_2} \sum_{s\in  \mathcal{S}_{i_1,i_2} } \frac{p(s)}{q_{i_1,i_2}} \log_2 \frac{p(s)}{q_{i_1,i_2}} \\
	&=H(\Omega) +  \sum_{i_1,i_2} \sum_{s\in  \mathcal{S}_{i_1,i_2} } p(s) \log_2 p(s)- \sum_{i_1,i_2} \sum_{s\in  \mathcal{S}_{i_1,i_2} }  p(s) \log_2 q_{i_1,i_2}\\
	&=H(\Omega) +\sum_{s\in  \mathcal{S} } p(s) \log_2 p(s) -   \sum_{i_1,i_2}  \log_2 q_{i_1,i_2} \sum_{s\in  \mathcal{S}_{i_1,i_2} }  p(s)\\
	&=H(\Omega) -H(\Omega) -   \sum_{i_1,i_2} q_{i_1,i_2}  \log_2 q_{i_1,i_2}\\
	&= H(X_1,X_2)\\
	&= 	R(\Omega,(X_1,X_2)).
	\end{align*}
	Eq.~(\ref{Eq:IR2}) follows.
\end{proof}

The maximum entropy of an arbitrary feature is achieved when all  its outcomes   occur with equal probability, 
and this maximum value is proportional to the logarithm of the number of  possible outcomes  to the base $2$.
Thus Proposition~\ref{P:1}  implies that the more possible outcomes  a feature has, 
the higher  entropy reduction it could possibly lead to.

Meanwhile,  a feature with an arbitrary number of outcomes can be viewed as a combination of  \emph{binary features}, 
the ones with two possible outcomes.
Even though the entropy of the combination of two features is greater than  each of them,
Proposition~\ref{P:2} shows that partitioning the space  subsequently  by two features has the same entropy reduction
as partitioning by their combination. 
Therefore, instead of considering features with outcomes as many as possible, we 
focus on binary features.


\section{Query repeats}
\label{A:error}
Here we assess the improvement of the error rates  by repeating the same query  twice.
Let $Y$ (or $N$) denote the event of the queried base pair  existing (or not) in the target structure. 
Let $y$ (or $n$) denote the event of the experiment confirming (or rejecting) the base pair.
Let $nn$ denote the event of two independent experiments both rejecting the base pair. 
Similarly, we have $yy$ and $y n$.
Utilizing the same sequences and structures as described in  Fig.~\ref{F:shapeM},
we estimate the conditional probabilities $	\mathbb{P}(n|N)\approx 0.993$ and $	\mathbb{P}(n|Y)\approx 0.055$.
The \emph{prior probability} $\mathbb{P}(Y)$ can be computed via the expected number  $l_1$ 
of confirmed queried base pairs on the path,
divided by the number of queries in each sample. 
Fig.~\ref{F:YesDistribution} displays the distribution of $l_1$ having mean around $5$.
Thus we adopt 
$\mathbb{P}(Y)= \mathbb{P}(N)=0.5$.
By Bayes' theorem, we calculate the \emph{posterior} 
\[
\mathbb{P}(N|nn)=\frac{\mathbb{P}(nn|N) \mathbb{P}(N)}{\mathbb{P}(nn)}=\frac{\mathbb{P}(n|N)^2 \mathbb{P}(N)}{\mathbb{P}(n|N)^2 \mathbb{P}(N) +\mathbb{P}(n|Y)^2\mathbb{P}(Y)},
\]
where $\mathbb{P}(nn)=\mathbb{P}(nn|N) \mathbb{P}(N) + \mathbb{P}(nn|Y) \mathbb{P}(Y)$.
Since two  experiments can be assumed to conditionally independent 
given $Y$ and also given $N$,
we have $\mathbb{P}(nn|N) =\mathbb{P}(n|N)^2$ and $\mathbb{P}(nn|Y) =\mathbb{P}(n|Y)^2$.
Similarly, we compute $\mathbb{P}(Y|nn)$, $\mathbb{P}(Y|yy)$ and $\mathbb{P}(Y|yn)$ etc, see
Table~\ref{Tab:twice}.
It demonstrates that,  if we get the same  answer to the query twice,
the error rates would become significantly smaller, for example,
$e_0^{[2]} =0.003$ and $e_1^{[2]} =0.00005$.
In the case of mixed answers $ny$ or $yn$,
its probability $\mathbb{P}(ny)=0.0292$,
i.e., it rarely happens. We would  recommend a third experiment 
and take the majority of three answers
when getting two mixed answers.

In principle, we can extend to reducing the error rates by repeating the same query $k$ times.
The above Bayesian argument is then generalized to sequential updating on  the error rates from $e_0$ to 
$e_0^{[k]} $. We can show that $e_0^{[k]}$ and $e_1^{[k]} $ approach to $0$,
as $k$ grows to infinity. In this case, the reliability of the leaf space $\mathbb{P}(s\in \Omega_{11}) $
is $1$, i.e.  the leaf always contain the target.
The fidelity of 
the distinguished structure $\mathbb{P}(s^{*}=s ) $ increases from $70\%$ to $94\%$
for sequences of length $300$.
To sum up, asking the same query a constant number of times  
significantly improves the fidelity of the leaf  and the distinguished structure.
\begin{table}[htbp]
	\caption{ The posterior probabilities after two experiments. We use the same sequences and structures as described in Fig.~\ref{F:shapeM}. } \label{Tab:twice}
	\begin{tabular}{ccc}
		\hline\noalign{\smallskip}
		\small 	Outcome of two experiments  &\small $Y$  &\small  $N$ \\
		\noalign{\smallskip}\hline\noalign{\smallskip}
		\small 	$nn $  &\small  $\mathbb{P}(Y|nn)=0.003$ &\small $\mathbb{P}(N|nn)=0.997$\\
		\small   $yy$ 	&\small $\mathbb{P}(Y|yy)=0.99995$ &\small $\mathbb{P}(N|yy)=0.00005$
		\\   	    
		\small   $ny$  or $yn $	&\small $\mathbb{P}(Y|ny)=0.881$ &\small $\mathbb{P}(N|ny)=0.119$
		\\   	    
		\noalign{\smallskip}\hline\noalign{\smallskip}	            
	\end{tabular}
\end{table}


\section{A new fragmentation}
\label{A:5}
Here we  present a novel fragmentation process, guided by the base-pair queries of  the ensemble tree
inferred from the restricted Boltzmann sample incorporating chemical probing.
Given the maximum entropy base pair, $(i,j)$, extraction splits the sequence into two fragments,
one being the extracted fragment $\mathbf{x}_{i,j}$ and the other, $\bar{\mathbf{x}}_{i,j}$,
i.e. $ \xi_{i,j}(\mathbf{x})=(\mathbf{x}_{i,j},\bar{\mathbf{x}}_{i,j})$.
We perform probing experiments on these two segments, and obtain the reactive probabilities
$\mathbf{q}_{i,j}$ and  $\bar{\mathbf{q}}_{i,j}$, respectively.
Let  $\mathbf{q}$ be the reactive probability for the entire sequence, and $\mathbf{q}'$ be
the embedding  of $\mathbf{q}_{i,j}$ into $\bar{\mathbf{q}}_{i,j}$, 
i.e. $\mathbf{q}'= \epsilon_{i,j} (\mathbf{q}_{i,j}, \bar{\mathbf{q}}_{i,j})$.
As shown in Section~\ref{S:modular},
if the Hamming distance $d(\mathbf{q},\mathbf{q}') $ is smaller than  threshold $\theta$, then the probing profiles are similar, i.e.~two bases $i$
and $j$ are paired. Otherwise, they are unpaired in the target structure.

The fragmentation procedure can be summarized as follows:

\begin{enumerate}
	\item a probing experiment for the entire sequence is performed and the reactive
	probability $\mathbf{q}$ is obtained,
	\item a Boltzmann sample $\Omega_{\text{probe}}$ of $N$ structures, consistent 
	with the probing data $\mathbf{q}$ is computed, 
	\item the ensemble tree $T(\Omega)$ containing the sub-spaces $\Omega_{\mathbf{k}}$ 
	and the corresponding maximum entropy base pairs $X_{i_{\mathbf{k}},j_{\mathbf{k}}}$ is constructed,
	\item starting with $\Omega$ we recursively answer the queries, determining thereby a path
	through the ensemble tree from the root to a leaf.
	\item once in a leaf, Proposition~\ref{C:bound2} guarantees the existence of a distinctive structure
	which we stipulate to be the target structure.
\end{enumerate}

\bibliographystyle{elsarticle-harv} 
\bibliography{ref}   

\end{document}